\documentclass[10pt,journal,epsfig,twoside]{IEEEtran}
\usepackage{amsmath}
\usepackage{graphicx} 
\usepackage{epstopdf}
\usepackage{amssymb}
\usepackage{amsfonts}
\usepackage{amsthm}
\usepackage{cite}
\usepackage{bm,comment}
\usepackage{algorithm}
\usepackage{algpseudocode}

\usepackage{subeqnarray}
\usepackage{subfigure}
\usepackage{multicol}
\usepackage{multirow}
\usepackage{diagbox}
\usepackage{slashbox}
\usepackage{stfloats}
\usepackage{float}
\usepackage{color} 
\usepackage{cases}
\usepackage{setspace}

\floatname{algorithm}{Algorithm}
\renewcommand{\algorithmicrequire}{\textbf{Given:}} 
\renewcommand{\algorithmicensure}{\textbf{Return:}}

\newtheorem{proposition}{Proposition}
\newtheorem{remark}{Remark}

\renewcommand{\baselinestretch}{0.97}
\graphicspath{{figs/}}

\begin{document}
\IEEEoverridecommandlockouts

\title{Broadcasting in Centimeter-Wave Gigantic MIMO Systems: Beam Shaping is All You Need}
\author{Boyu Ning,  Yike Xie, Jaebum Park, Songjie Yang, Weidong Mei, Emil Bj{\"o}rnson, and Robert W. Heath Jr.

\thanks{Boyu Ning and Emil Bj{\"o}rnson are with Department of Communication Systems, KTH Royal Institute of Technology, 100 44 Stockholm, Sweden (email: boydning@outlook.com; emilbjo@kth.se). This work was supported by the Grant 2022-04222 from the Swedish Research Council.}     

\thanks{Yike Xie and Weidong Mei are with National Key Laboratory of Wireless Communications, University of Electronic Science and Technology of China, Chengdu 611731, China (e-mails:  202411220601@std.uestc.edu.cn; wmei@uestc.edu.cn).}

\thanks{Songjie Yang is with the Beijing National Research Center for Information Science and Technology (BNRist), Tsinghua University, Beijing 100084, China. (e-mails:  18101979553@163.com).}

\thanks{Jaebum Park and Robert W. Heath Jr. are with the Department of Electrical and Computer Engineering, University of California at San Diego, San Diego, CA 92093 USA (e-mail: jp261@ucsd.edu; rwheathjr@ucsd.edu).}

}

\maketitle

\begin{abstract}
Beamforming is a key enabling technology for centimeter-wave Gigantic MIMO (G-MIMO) systems, using antenna arrays to generate wide beams for cell-wide broadcasting or narrow beams for data transmission. This paper presents a novel \emph{beam shaping technology} for designing environment-aware broadcast beams (i.e., beams for common control information delivery, as in 5G SSB). As a case study, we shape a \emph{prostrate beam} in the vertical plane to approximate uniform electromagnetic power distribution across the ground, thereby enhancing broadcast coverage quality. We first propose a geometric broadcasting model that, through coordinate transformation, relates base station height, beam spread, maximum service distance, and downtilt angle. Building on this model and the signal propagation model, we formulate the beam shaping problem as a non-convex optimization problem that aims to maximize ground-level coverage. To solve it, we develop a generalized optimization framework. We reformulate the problem into a matrix representation, equivalently transforming it into a target pursuit optimization (TPO) problem. Using variable substitution, the objective is convexified into a quadratically constrained quadratic program, and the non-convex rank constraint is handled via successive convex approximation, decomposing the problem into iteratively solved subproblems. We provide complexity and convergence analyses for the proposed algorithm. Numerical simulations confirm that the optimization effectively shapes the intended beam pattern and that the proposed prostrate beam yields significant performance gains, including substantially extended coverage distance, over conventional wide-beam benchmarks.
\end{abstract}
\begin{IEEEkeywords}
Gigantic MIMO, broad beam, prostrate beam, site-specific beamforming, beam pattern optimization.
\end{IEEEkeywords}

\section{Introduction}
 Multiple-input multiple-output (MIMO) technology has fundamentally transformed wireless communications over the past three decades\cite{emilten, HeathLozanoFoundationsMIMOCommunications2018}. By employing multiple transmit and receive antennas, MIMO enables the simultaneous transmission of multiple data streams, thereby dramatically increasing spectral efficiency. In the Universal Mobile Telecommunications System (UMTS) 3G era, base stations typically employed a simple $1$T$2$R (one transmit, two receive) antenna configuration, primarily leveraging receive diversity. During the Long Term Evolution (LTE) 4G era, the most common base station configuration featured $2$T$2$R or $4$T$4$R digital antenna ports, which supported basic MIMO diversity and multiplexing functions but lacked the capability for precise beam direction control\cite{4t4r}. In the New Radio (NR) 5G era, massive MIMO configurations typically incorporate $16$, $32$, or $64$ digital antenna ports, enabling stronger directional beamforming and delivering a potential peak rate of approximately $20\times$ compared to 4G systems\cite{ning2026precoding, 16tr, DreifuerstHeathMassiveMIMO5GHow2023}.

 In addition to the increase in the number of antennas, the evolution of wireless communications has involved adopting higher carrier frequencies to access more available bandwidth. Consequently, 5G was co-designed for two frequency ranges: FR1 ($0.4–7$ GHz) and FR2 ($24–71$ GHz)\cite{dilli2020analysis}. FR2, also known as the millimeter-wave band, provides greater bandwidth to support high data rates and facilitates the integration of a large number of antennas into compact arrays\cite{HeathSIP}, thereby enabling key massive MIMO techniques such as hybrid analog-digital beamforming \cite{HeathOMP}, beam training\cite{training, AlkhateebEtAlInitialBeamAssociation2017, DreifuerstHeathHierarchicalMLCodebookdesign2024}, and beam tracking\cite{tracking}. However, due to the propagation characteristics of high frequencies, mmWave systems face persistent challenges such as signal blockage and poor outdoor-to-indoor penetration\cite{Mohebi}. As a result, mmWave technology has not achieved widespread commercial success or large-scale deployment. Similarly, communication systems in even higher frequency bands, such as sub-terahertz, also confront significant propagation-related obstacles\cite{ning2023beamforming}.

In 2025, 3GPP initiated standardization discussions for 6G\cite{lin2025bridge}. There is growing interest from both academia and industry in the underutilized centimeter-wave bands spanning 7–24 GHz, also referred to as the upper mid-band or FR3\cite{tang2025preliminary}. These bands offer a compelling trade-off: they provide substantially more bandwidth than traditional sub-6 GHz frequencies, while exhibiting better coverage and penetration characteristics than mmWave bands\cite{testolina2024sharing,kang2024cellular,cui20256g}. This frequency range enables MIMO systems to support base stations with at least 256 antenna ports, or even thousands of antenna ports in advanced configurations, marking the advent of a new era often termed Gigantic MIMO (G-MIMO)\cite{Gigantic}. Depending on the array scale or number of antennas, different nomenclatures may emerge, e.g., extremely large‑scale MIMO (XL-MIMO) for general large arrays\cite{wang2024tutorial}, and ultra‑massive MIMO (UM‑MIMO) for even larger numbers of antennas, which is the prevailing terminology in terahertz systems\cite{ning2023beamforming}. This evolution brings forth potential new technologies such as tri-hybrid architectures\cite{heath2025tri,castellanos2025embracing}, near-field beam focusing\cite{XL1,XL2, HeathGonzalez-PrelcicBeamsharingMixedNearfieldfarfield2024, nearpeng}, and dynamic antenna movement\cite{ning2025movable, zhu2025tutorial, 6dma, weidong, songjie, CarlsonEtAlWidebandDynamicMetasurface2025a}.

Broadcasting is a critical functionality of cellular systems, as it enables initial user access and fundamental cell management. This broadcast coverage must balance wide reach within the cell with minimal interference to its neighbors. Consequently, array downtilt has been a common technique since the UMTS 3G era to improve cell isolation and/or increase coverage\cite{umtstilt}. Array downtilt can be adjusted mechanically\cite{mechanical} and/or electrically\cite{electrical}. Various techniques for electrical tilt exist, such as remote electrical tilt (RET) and variable electrical tilt (VET)\cite{ltetilt}\footnote{ The use of electrical downtilt eliminates the need for base station site visits and tower climb by controlling the tilt angle via the network management system, so that operational cost is saved; however, mechanical down tilt is also needed because the electrical tilt range is limited.}. In the LTE 4G era, with a frequency reuse factor of one, array downtilt is likely to be even more important for achieving good cell isolation and, hence, high system performance\cite{tiltcomp}. The coverage beamwidth of LTE broadcast signals is mainly determined by the antenna radiation pattern, and the antenna tilting angle is optimized to achieve the best coverage\cite{3gpptilt}. With the advent of 5G and beyond, it is essential to optimize not only antenna downtilt but also the broadcast beam design. The term ``broadcast'' refers to the transmission of cell‑wide common information to all users within a cell sector. Specifically, it is realized through the synchronization signal block (SSB), which contains the physical broadcast channel (PBCH) carrying the master information block (MIB), as well as system information blocks (SIBs). The broadcast signal, i.e., SSB, is transmitted via multiple sector‑wide beams to cover the entire cell sector\cite{DreifuerstH022,bang2021millimeter}. These wide beams are formed by multiple antenna elements working jointly to cover a common vertical range while addressing independent azimuthal sectors\cite{ssbhang}. 

The synthesis of wide beams can effectively utilize windowing techniques from digital filter design\cite{wu2005study}. Given a fixed number of antenna elements, one can apply classical window functions such as Hamming, Hann, or Blackman to shape beams of different widths\cite{gopika2015fir}. However, the beamwidth attainable with these windows is primarily dictated by the window type itself, and fine adjustment would typically require changing the antenna count (equivalent to the filter order). To achieve flexible beamwidth tuning without altering the array size, several heuristic closed-form solutions have been developed. These include methods based on the superposition of pencil beams\cite{li2017shaped}, sharpened raised-cosine finite impulse response (FIR) filters \cite{molnar2018sharpened}, piecewise polynomial beam shaping\cite{toso2022simple}, modified Dolph-Chebyshev polynomial \cite{molnar2024synthesis}, polynomial transforms of Gaussian weights \cite{molnar2023synthesis}, and sum of symmetrical array response vectors (S-SARVs) \cite{trad4}. The smoothness of a wide beam can be further improved by applying iterative optimizations\cite{Broadbeam,trad3,ma2022passive}. To realize feasible phased-array implementations, several workarounds have been investigated from different perspectives, including hybrid beamforming architectures \cite{hb1,hb2,zhang2019power}, multi-subarray designs \cite{zhu20193, ning2024codebook}, and dual-polarization techniques \cite{girnyk2021efficient, petersson2022energy}.

Given that centimeter-wave G-MIMO systems can support thousands of digital ports, or even more in hybrid antenna architectures, a key challenge lies in balancing complexity and performance for broadcasting\cite{massivemimobook}. Specifically, coverage capability is improved by increasing the number of broadcast beams: more beams result in narrower beamwidths and higher beamforming gain, but also increase the time required for beam sweeping during the initial access process\cite{giordani2018tutorial}.  Even after optimizing the downtilt angle and the number of broadcast beams, coverage imbalance remains due to non‑uniform propagation conditions. Conventional broadcast beam design assumes uniform gain across the main lobe, but in practice, path loss and blockage cause large variations in received power across users. To address this, we propose a redistributive \emph{beam shaping} approach that reduces beamforming gain in directions with excessive received power and increases it in directions with insufficient power, i.e., an effect we refer to as the \emph{Robin Hood principle} in beam design. This effect refers to a resource allocation strategy that ``takes from the rich and gives to the poor,'' which in our context means that the beamforming gain is deliberately reduced in strong-signal directions and redistributed to weak-signal or coverage-limited directions. 

 Motivated by this, in this paper, we propose a beam shaping technique tailored for broadcasting in centimeter-wave G-MIMO communication systems\footnote{For sub-6 GHz systems, the limited number of antennas makes beam shaping difficult, and uniform wide beams are typically used for broadcasting. For mmWave systems, high-gain narrow beams are required to overcome severe path loss. In contrast, centimeter-wave systems offer a sufficient number of antennas while experiencing less severe propagation loss than mmWave, which motivates the beam shaping approach proposed in this work. }. Conventional beamforming focuses on enhancing signal strength or suppressing interference in \emph{target directions}. In contrast, beam shaping aims to engineer the \emph{entire beam pattern} to adapt to the physical environment, assuming the environmental prior information is available\cite{radioprofile}. Specifically, the broadcasting beam pattern can be optimized following the Robin Hood principle. For example, in the vertical plane, the beam pattern can be tailored to the ground geometry and its associated propagation conditions to balance the received power. To this end, we propose a novel prostrate beam design as a realization of beam shaping in the vertical plane.\footnote{Beam shaping can also be utilized in the horizontal plane by leveraging environmental priors \cite{radioprofile}, whereby more radiation power is directed toward buildings to improve outdoor-to-indoor (O2I) performance, while less power is allocated to line-of-sight (LoS) areas or terminal-free zones.} Unlike conventional wide beams that have uniform radiation power across the coverage directions, the prostrate beam ensures uniform power distribution along the ground, thereby enhancing broadcast performance. The main contributions of this paper are summarized as follows:

\begin{itemize} 
\item We propose a novel geometric broadcasting model. Using coordinate transformation techniques, we analyze the beam geometry of a tilted antenna array in a global coordinate system, establishing relationships among the base station height, beam spread (i.e., coverage width), maximum service distance, and downtilt angle.

\item Based on the geometric broadcasting model and signal propagation model, we formulate a beam shaping problem to generate a prostrate beam. The objective is to design a beam whose radiated power appears to ``crawl" along the ground, maximizing ground-level coverage. However, the resulting beam shaping problem is generally non-convex and difficult to solve directly.

\item To address this class of beam shaping problems, we propose a general optimization framework. We first reformulate the problem into a matrix representation, equivalently transforming it into a target pursuit optimization (TPO) problem. Using variable substitution techniques, the objective function is convexified into a quadratically constrained quadratic program (QCQP). To handle the non-convex rank constraint, we employ the successive convex approximation technique, decomposing the problem into several subproblems that are solved iteratively.

\item We provide complexity and convergence analyses for the proposed algorithm. Simulation results demonstrate that our optimization method effectively shapes the beam pattern as intended. Compared to traditional wide beams, the proposed prostrate beam achieves significantly extended coverage distance.

\end{itemize}

The remainder of this paper is organized as follows. Section II presents the system model. Section III provides the motivation and formulates the beam-shaping problem for prostrate beam design. Section IV presents our proposed methods. Section V evaluates the performance via numerical results. Finally, Section VI concludes the paper.

\begingroup
\allowdisplaybreaks
\section{System Model}
We consider a downlink G-MIMO communication system, where the BS is equipped with a large antenna array for broadcasting a public message to system users. 
\subsection{Array Model}
We assume the BS is equipped with a uniform planar array (UPA) with $N_a=MN$, which can simultaneously control horizontal and vertical radiation, achieving three-dimensional (3D) beam control. The coordinate system and the orientation of the array are depicted in Fig. \ref{array}, where $M$ is the number of horizontal antenna elements and $N$ is the number of vertical antenna elements. $d_x$ and $d_z$ denote the horizontal and vertical inter-antenna spacing, respectively. 

 \begin{figure}[t]
\centering
\includegraphics[width=3.4in]{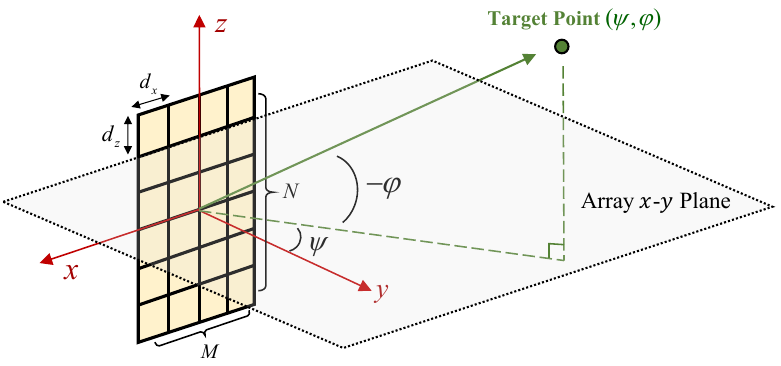}
\caption{Illustration of the UPA geometry. The array has $M$ horizontal and $N$ vertical elements with half‑wavelength spacing $d_x = d_z = \lambda/2$. A far‑field target point is characterized by azimuth $\psi$ and elevation $\varphi$.}\label{array}
\vspace{-8pt}
\end{figure}

Following the conventional UPA model, we consider a half-wavelength spacing $d_z=d_x=\lambda/2$ and count from $0$ the row and column indices. Then, for a far-field target point at azimuth \(\psi \in [-\pi/2,\pi/2]\) and elevation \(\varphi \in [-\pi/2,\pi/2]\), the array response matrix \(\mathbf{A}(\psi, \varphi) \in \mathbb{C}^{N \times M}\) captures the phase progression across a uniform rectangular array with \(N\) vertical and \(M\) horizontal elements. Its \((n,m)\)-th entry is given by  
\begin{equation}
\left[ \mathbf{A}(\psi, \varphi) \right]_{n,m} = \frac{1}{\sqrt{NM}} e^{j\pi (m \cos\varphi \sin\psi + n \sin\varphi )},
\end{equation}
where $n=0,1,...,N-1$ and $m=0,1,...,M-1$. The exponential term represents the relative phase shift of the antenna element located at the \(n\)-th vertical and \(m\)-th horizontal position relative to the reference element at \((0,0)\). The factor \(\frac{1}{\sqrt{NM}}\) ensures power normalization, while the expression \(m \cos\varphi \sin\psi\) and \(n \sin\varphi\) account for the phase differences along the horizontal and vertical directions due to the incoming planar wave. 

Let ${\bf{W}} \in {\mathbb{C}}^{N\times M}$ denote the beamforming matrix. The beam radiation in the direction of $(\psi ,\varphi)$ is given by
\begin{equation}\label{radr}
r(\psi ,\varphi ) = \sum\nolimits_{n,m} {{{\left[ {{\bf{A}}^*(\psi ,\varphi ) \circ {\bf{W}}} \right]}_{n,m}}}
\end{equation}
where $ ^*$ is the conjugate operator and  $\circ$ is the Hadamard product operator. We assume that the beamforming weights in the horizontal and vertical directions are designed independently, and let \(\mathbf{w}_h\) and \(\mathbf{w}_v\) denote the horizontal and vertical beamforming vectors, respectively. Then, the beamforming matrix is subject to 
\begin{equation}\label{ww}
    {\bf{W}}={\bf{w}}_v \otimes{\bf{w}}_h^T,
\end{equation}
where $\otimes$ is the Kronecker product operator. The array response vectors in vertical and horizontal directions can be respectively expressed as ${{\bf{a}}_v}(\psi ,\varphi )$ and ${{\bf{a}}_h}(\psi ,\varphi )$, i.e.,
\[
{{\bf{a}}_v}(\psi ,\varphi ) \!=\! {{\bf{a}}_v}(\varphi ) \!=\! \frac{1}{{\sqrt N }}\left[ {1,{e^{j\pi \sin \varphi }},...,{e^{j\pi (N - 1)\sin \varphi }}} \right]^T
\]
and
\[
{{\bf{a}}_h}(\psi ,\varphi ) \!=\! \frac{1}{{\sqrt M }}\left[1,{e^{j\pi \cos \varphi \sin \psi }},...,{e^{j\pi (M - 1)\cos \varphi \sin \psi }}\right]^T.   
\]
Then, the beam radiation to the target point in the direction of $(\psi ,\varphi)$ can be decoupled by
\begin{align}\label{ra}
r(\psi ,\varphi ) &= \sum\limits_{n,m} {{{\big[ {{{\bf{A}}^*}(\psi ,\varphi ) \circ {\bf{W}}} \big]}_{n,m}}} \notag \\
 &= \sum\limits_{n,m} {{{\left\{ {\left[ {{\bf{a}}_v^*(\psi ,\varphi ) \otimes {{\bf{a}}_h}{{(\psi ,\varphi )}^H}} \right] \circ \left[ {{{\bf{w}}_v} \otimes {\bf{w}}_h^T} \right]} \right\}}_{n,m}}} \notag\\
&\mathop  = \limits^{(a)} \sum\limits_{n,m} {{{\left\{ {\left[ {{\bf{a}}_v^*(\psi ,\varphi ) \circ {{\bf{w}}_v}} \right] \otimes \left[ {{{\bf{a}}_h}{{(\psi ,\varphi )}^H} \circ {\bf{w}}_h^T} \right]} \right\}}_{n,m}}} \notag\\
&\mathop  = \limits^{(b)} \left[ {\sum {{\bf{a}}_v^*(\psi ,\varphi ) \circ {{\bf{w}}_v}} } \right] \cdot \left[ {\sum {{\bf{a}}_h^*(\psi ,\varphi ) \circ {{\bf{w}}_h}} } \right] \notag\\
& = \left[ {{{\bf{a}}_v}{{(\psi ,\varphi )}^H}{{\bf{w}}_v}} \right] \cdot \left[ {{{\bf{a}}_h}{{(\psi ,\varphi )}^H}{{\bf{w}}_h}} \right],
\end{align}
where equality $(a)$ follows from the identity 
\begin{equation}
(\mathbf{A} \otimes \mathbf{B}) \circ (\mathbf{C} \otimes \mathbf{D}) = (\mathbf{A} \circ \mathbf{C}) \otimes (\mathbf{B} \circ \mathbf{D})    
\end{equation}
and equality $(b)$ holds due to the fact that
\begin{equation}
\sum_{n=1}^{N}\sum_{m=1}^{M} [\mathbf{a} \otimes \mathbf{b}^T]_{n,m} = \sum_{n=1}^{N}\sum_{m=1}^{M} a_n b_m= \sum_{n=1}^{N}a_n\sum_{m=1}^{M} b_m.  
\end{equation}
Equation (\ref{ra}) implies that the 3D beam radiation power is the product of two-dimensional (2D) beam radiation powers, i.e., 
\begin{equation}\label{dpower}
 \begin{aligned}
p(\psi ,\varphi ) &= {\left| {r(\psi ,\varphi )} \right|^2}\\
 &= {\left| {{{\bf{a}}_v}{{(\psi ,\varphi )}^H}{{\bf{w}}_v}} \right|^2} \cdot {\left| {{{\bf{a}}_h}{{(\psi ,\varphi )}^H}{{\bf{w}}_h}} \right|^2}.
\end{aligned}   
\end{equation}
Consequently, one can design the beamforming vectors based on 2D patterns in horizontal and vertical, respectively, and obtain the 3D beamforming vector by using a Kronecker product on the 2D beamforming vectors, as presented in (\ref{ww}).
\subsection{Geometric Broadcasting Model}
\begin{figure}[t]
\centering
\includegraphics[width=3.4in]{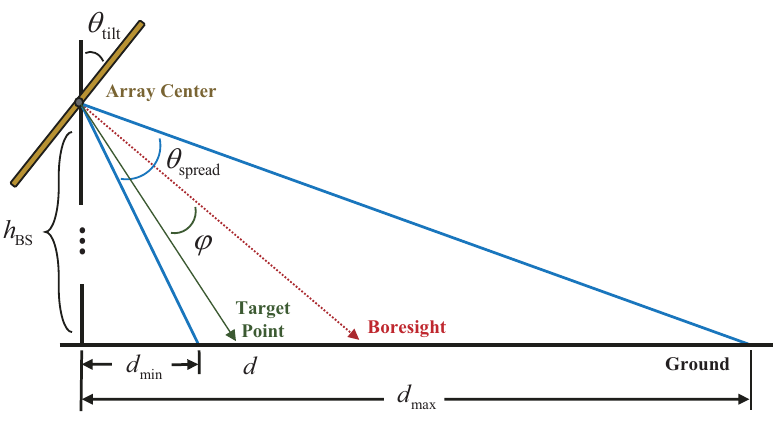}
\caption{Illustration of the BS broadcasting geometry. The array is tilted by $\theta_{\rm{tilt}}$, with height $h_{\rm{BS}}$ above ground. The vertical beam spread is $\theta_{\rm{spread}}$, assumed to be centered at the array boresight.}\label{model}
\vspace{-8pt}
\end{figure}

As shown in Fig. \ref{model}, we assume that the ground has a constant level, and the absolute height from the ground to the center of the BS's array is denoted as $h_{\rm{BS}}$. The array is tilted by an angle $\theta _{\rm{tilt}}$, known as the downtilt angle, to transmit signals to the ground. We use the term ``service region'' to denote the specific ground area that an array is designed to cover, as shown in Fig. \ref{region}. To broadcast public messages to all users in the service region, with a maximum distance of $d_{\rm{max}}$, the BS needs to generate a beam with a vertical beam spread of $\theta_{\rm{spread}}$, where we assume that the center of the beam aligns with the boresight of the array. Given the height of the base station and  the maximum distance of the service region, the downtilt angle and the beam spread have a mutually constraining relationship, as stated in Proposition 1.
\begin{proposition}\label{pp1}
Given the height of the base station $h_{\rm{BS}}$, the beam spread $\theta_{\rm{spread}}$, and the maximum distance of the service region ${d_{{\rm{max}}}}$, the downtilt angle $\theta _{\rm{tilt}}$ is given by
\begin{equation}\label{p9}
{\theta _{{\rm{tilt}}}} = \frac{{{\theta _{{\rm{spread}}}} + \pi }}{2} - \arctan \left( {\frac{{{d_{{\rm{max}}}}}}{{{h_{{\rm{BS}}}}}}} \right).    
\end{equation}
Given the height of the base station $h_{\rm{BS}}$, the downtilt angle $\theta _{\rm{tilt}}$, and the maximum distance of the service region ${d_{{\rm{max}}}}$, the beam spread $\theta_{\rm{spread}}$ is given by
\begin{equation}\label{p10}
{\theta _{{\rm{spread}}}} = 2\left[ {\arctan \left( {\frac{{{d_{{\rm{max}}}}}}{{{h_{{\rm{BS}}}}}}} \right) + {\theta _{{\rm{tilt}}}}} \right] - \pi,
\end{equation}
provided that \(\sin\theta_{\rm{spread}}+2/N\le 1\).
\end{proposition}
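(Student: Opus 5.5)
The plan is to prove both formulas from a single planar geometric identity in the vertical plane, using the coordinate transformation between the array's local frame (where the beam is described by the elevation angle $\varphi$ in $\mathbf{a}_v(\varphi)$) and the global frame attached to the ground. I will place the array center at $(0,h_{\rm{BS}})$ in a vertical-plane global coordinate system whose horizontal axis lies on the ground. A downtilt $\theta_{\rm{tilt}}$ rotates the array boresight to the ray at angle $\theta_{\rm{tilt}}$ below the horizontal. Since the beam is assumed centered on boresight with spread $\theta_{\rm{spread}}$, it occupies the local angles $\varphi\in[-\theta_{\rm{spread}}/2,\theta_{\rm{spread}}/2]$. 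After the rotation, its two edges lie at global depression angles $\theta_{\rm{tilt}}\mp\theta_{\rm{spread}}/2$.

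The key step is to identify which edge determines $d_{\max}$. The upper edge, at depression angle $\theta_{\rm{tilt}}-\theta_{\rm{spread}}/2$, is the one that reaches farthest along the ground. The farthest ground point at horizontal distance $d_{\max}$ is seen from the array at depression angle $\pi/2-\arctan(d_{\max}/h_{\rm{BS}})$, since $\arctan(d_{\max}/h_{\rm{BS}})$ is the angle measured from the downward vertical. Requiring the upper edge to hit exactly this point gives
\begin{equation*}
\theta_{\rm{tilt}}-\frac{\theta_{\rm{spread}}}{2}=\frac{\pi}{2}-\arctan\left(\frac{d_{\max}}{h_{\rm{BS}}}\right).
\end{equation*}
Solving for $\theta_{\rm{tilt}}$ gives (\ref{p9}), and solving for $\theta_{\rm{spread}}$ gives (\ref{p10}); this part is routine algebra. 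I will also note that we need $\theta_{\rm{tilt}}-\theta_{\rm{spread}}/2>0$, i.e., the upper edge points below the horizon, for the intersection with the ground to exist. This is automatic because $\arctan(d_{\max}/h_{\rm{BS}})<\pi/2$.

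The main obstacle is justifying that the ``beam edge'' is a well-defined direction for an $N$-element vertical array, which is where the condition $\sin\theta_{\rm{spread}}+2/N\le 1$ enters. In the array's sine space $u=\sin\varphi$, which is the variable appearing in $\mathbf{a}_v(\varphi)$, the intended coverage is an interval around $u=0$. Any realizable pattern from $N$ elements has a transition (main-lobe roll-off) width of order $2/N$ in $u$, namely the null-to-null half-width of a uniform $N$-element aperture. I would first try to show that the condition keeps the intended interval together with this $2/N$ roll-off inside the visible region $|u|\le 1$. Then the edge angle $\theta_{\rm{spread}}/2$ corresponds to a genuine physical direction, not to a point near endfire where the mapping $u\mapsto\varphi$ degenerates. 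Under that condition the angular description of the beam transfers faithfully to the global frame, and (\ref{p10}) yields a spread that the array can actually synthesize.

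If I cannot reconcile the exact constant in the condition with the roll-off argument, I will state it as the regime in which the spread obtained from (\ref{p10}) is physically realizable. The geometric derivation above does not depend on it.
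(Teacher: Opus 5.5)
Your proposal is correct and follows the paper's proof. Appendix A reads off $d_{\rm{max}} = h_{\rm{BS}}\tan\bigl(\tfrac{\theta_{\rm{spread}}+\pi}{2}-\theta_{\rm{tilt}}\bigr)$ from the geometry, which is exactly your upper-edge identity $\theta_{\rm{tilt}}-\theta_{\rm{spread}}/2=\pi/2-\arctan(d_{\rm{max}}/h_{\rm{BS}})$, and then inverts it; the proviso $\sin\theta_{\rm{spread}}+2/N\le 1$ is not used there either, since it only makes $\theta^*_{\rm{spread}}=\arcsin(\sin\theta_{\rm{spread}}+2/N)$ in the transition-range definition well defined, so your roll-off discussion can be dropped.
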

\begin{IEEEproof}
See Appendix A.
\end{IEEEproof}
\begin{remark}
Proposition 1 offers useful insights for cellular network planning and deployment. If the network coverage falls short of expectations, engineers can leverage (\ref{p9}) to tweak the downtilt angle, thereby achieving precise broadcast coverage. In cases where adjusting the downtilt angle is not an option, (\ref{p10}) comes into play, enabling the adjustment of the beam spread to fulfill the same purpose.\footnote{These methods operate under the assumption that the BS's transmission power is ample. Any deficiencies in coverage are attributed to the orientation of the array and the design of the beam.}    
\end{remark}

 \begin{figure}[t]
\centering
\includegraphics[width=2.7in]{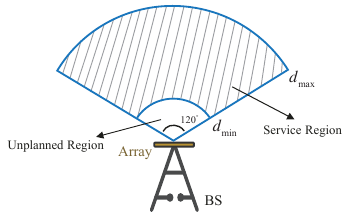}
\caption{Illustration of the service region and unplanned region. The BS is equipped with an antenna array covering a $120^\circ$ horizontal sector. The service region (gray) denotes the intended main‑lobe coverage area, bounded by the minimum distance $d_{\min}$ and the maximum distance $d_{\max}$ from the BS. The unplanned region in close proximity to the BS is served by the sidelobes.}\label{region}
\vspace{-12pt}
\end{figure}

In this work, we consider a cellular scenario where the geographical area is divided into multiple regions referred to as \emph{cells}, each served by a BS. To enhance spectrum utilization and reduce interference, each cell is usually divided into three sectors, each served by an antenna array of the base station. Hence, the horizontal coverage of the array is $120^{\circ}$, as shown in Fig. \ref{region}. The vertical coverage angle (known as the beam spread $\theta_{\rm{spread}}$ in this paper) is typically smaller, as the signal needs to be concentrated within the height range of ground users. If characterized by the azimuth angle and serving distance, the service region can be expressed by
\begin{equation}
  {{\cal C}_s} = \left\{ {\left( {\psi ,{d}} \right)\left| \;{\psi  \in \left[ { - \frac{\pi }{3},\frac{\pi }{3}} \right],{d} \in [{d_{{\rm{min}}}},{d_{{\rm{max}}}}]} \right.} \right\},
\end{equation}
while the unplanned region in the cell can be expressed by
\begin{equation}
      {{\cal C}_u} = \left\{ {\left( {\psi ,{d}} \right)\left| \;{\psi  \in \left[ { - \frac{\pi }{3},\frac{\pi }{3}} \right],{d} \in [0,{d_{{\rm{min}}}}]} \right.} \right\}.
\end{equation}

\begin{remark}
In practical cellular network deployments, signal coverage in unplanned regions is generally guaranteed. These regions are in close proximity to the BS, resulting in much lower propagation loss. Thus, by utilizing the sidelobe radiation of the beam or the electromagnetic waves reflected in the environment, users in the unplanned regions can be served.

We also note that near-field effects are negligible in our configuration. Specifically, for most UPA configurations at centimeter-wave bands, the near-field boundary is only a few meters. Over such short distances, path loss is minimal, and beam misalignment due to near-field propagation does not lead to significant energy loss. Consequently, our beamforming approach, which targets far-field cell-edge users, remains effective without requiring near-field modifications.
\end{remark}

\subsection{Propagation Model}
\begin{figure}[t]
\centering
\includegraphics[width=3.3in]{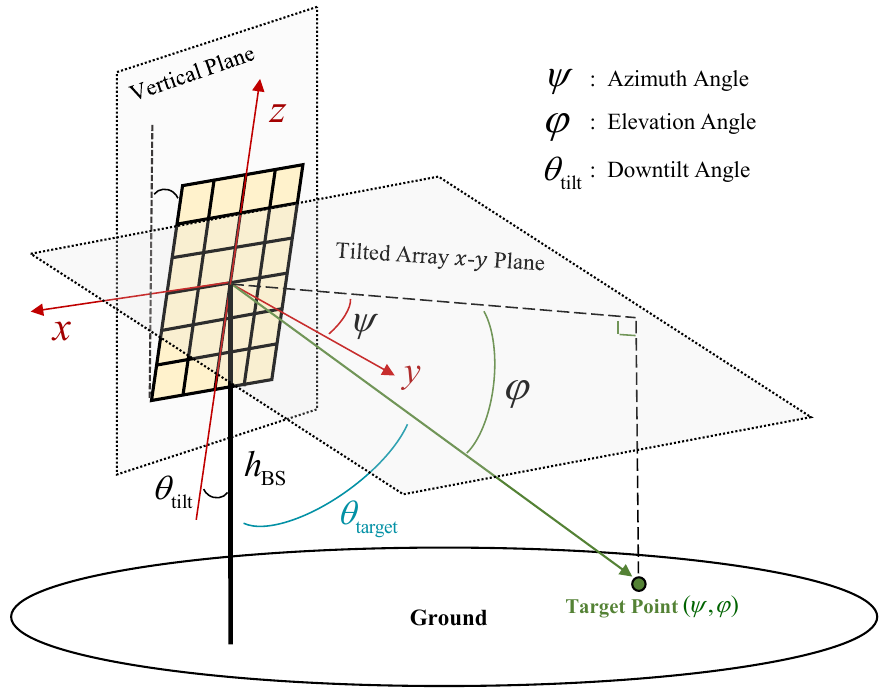}
\caption{Geometry of a target point as seen from the tilted array. The array is tilted by $\theta_{\rm{tilt}}$ with respect to the vertical plane. The BS height is $h_{\rm{BS}}$, and the target point on the ground is defined by azimuth $\psi$ and elevation $\varphi$ according to the tilted array $x$-$y$ plane. The angle $\theta_{\rm{target}}$ is the deviation of the target direction from the vertical axis.}\label{distance}
\vspace{-10pt}
\end{figure}
In conventional beam design, the beam pattern is specified in the array's local polar coordinates $(\psi, \varphi)$ without explicit consideration of the propagation environment. However, our goal is to design an \emph{environment-aware} broadcast beam that compensates for non-uniform path loss across the coverage area. To do so, we first need a model that maps each direction $(\psi, \varphi)$ from the tilted array to the actual propagation loss experienced by a ground user.

Consider a user located at direction $(\psi, \varphi)$ within the service region under LoS conditions. The actual received signal power differs from the radiated power in (\ref{dpower}) due to propagation loss. Following 3GPP TR 38.901, the path loss for an urban macro (UMa) scenario is given by
\begin{equation}
\begin{aligned}
 P_L^{\rm{dB}} &= 28.0 + 22{\log _{10}}({d_{{\rm{3D}}}}) + 20{\log _{10}}({f_c}),
\end{aligned}
\end{equation}
where $f_c$ denotes the center frequency normalized by $1$ GHz, and ${d_{{\rm{3D}}}}$ is the distance between the array center and the target point, whose value is normalized by $1$ m. The absolute magnitude loss due to the propagation can be written as 
\begin{equation}\label{ml0}
{M_L} = {10^{P_L^{{\rm{dB}}}/20}}.   
\end{equation}
As shown in Fig. \ref{distance}, the distance between the center of the array and the target point, ${d_{{\rm{3D}}}}$, is related to the height of the base station and the angle between the BS pole and the target direction, i.e., $h_{\rm{BS}}$ and $\theta _{\rm{target}}$, as stated in Proposition 2.
\begin{proposition}
When an array with a downtilt angle of $\theta _{\rm{tilt}}$ transmits a signal in the direction of azimuth angle $\psi$ and elevation angle $\varphi$ within the array's coordinate system, the propagation distance from the array to the point where the signal reaches the ground is
\begin{equation}
{d_{3D}} = \frac{{{h_{{\rm{BS}}}}}}{{\cos \psi \cos \varphi \sin {\theta _{{\rm{tilt}}}} + \sin \varphi \cos {\theta _{{\rm{tilt}}}}}}.
\end{equation}
\end{proposition}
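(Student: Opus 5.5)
The plan is to express the unit direction vector of the ray $(\psi,\varphi)$, given in the array's local frame, in a global frame whose vertical axis is aligned with the BS pole. Then I intersect that ray with the ground plane located $h_{\rm{BS}}$ below the array center.

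First I fix the local frame to match the array response in Section II-A. The horizontal phase term $m\cos\varphi\sin\psi$ and the vertical term $n\sin\varphi$ imply the local unit direction
\[
\mathbf{u}_{\rm loc}=\big[\cos\varphi\sin\psi,\;\cos\varphi\cos\psi,\;\sin\varphi\big]^T .
\]
Here the first component lies along the horizontal array axis $x$, the second along the boresight $y$ (the array normal), and the third along the vertical array axis $z$. With $\theta_{\rm tilt}=0$, boresight is horizontal.

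Second, I model the downtilt as a rotation about the horizontal array axis $x$ by $\theta_{\rm tilt}$. This rotation pushes the boresight downward toward the ground. In the global frame $(x,y',z')$, where $z'$ points straight down along the pole toward the ground, the rotated boresight has a downward component $\sin\theta_{\rm tilt}$. The rotated array $z$-axis, which is tilted so that its ``elevation'' direction also points partly downward, has downward component $\cos\theta_{\rm tilt}$. The downward component of the ray is the projection of $\mathbf{u}_{\rm loc}$ onto the downward global axis:
\[
u_{\downarrow}=\cos\psi\cos\varphi\sin\theta_{\rm tilt}+\sin\varphi\cos\theta_{\rm tilt}.
\]
This quantity is exactly $\cos\theta_{\rm target}$, the cosine of the angle between the pole and the target direction in Fig.~\ref{distance}.

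Third, the ray reaches the ground when its vertical drop equals $h_{\rm{BS}}$. This gives $d_{3D}\,u_{\downarrow}=h_{\rm{BS}}$, i.e., $d_{3D}=h_{\rm{BS}}/\cos\theta_{\rm target}$, which is the claimed formula. The result is valid whenever $u_\downarrow>0$, meaning the ray actually hits the ground. For directions in the service region this follows from the beam-spread/tilt relation of Proposition~\ref{pp1}.

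The main obstacle is the sign and orientation convention. I have to make sure the sign of $\varphi$ and the direction of the tilt rotation agree with Fig.~\ref{distance}, so that positive $\varphi$ means pointing toward the ground. I would check this against two special cases. With $\psi=\varphi=0$ the formula gives $h_{\rm{BS}}/\sin\theta_{\rm tilt}$, the boresight slant range. With $\theta_{\rm tilt}=0$ and $\psi=0$ it gives $h_{\rm{BS}}/\sin\varphi$. Both match the figure, and they fix the rotation matrix unambiguously.
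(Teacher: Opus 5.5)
Your proposal is correct and follows essentially the same route as the paper. The paper obtains the direction by rotating the boresight reference vector by $\varphi$ and then $\psi$ in the array frame, rotates the result by $\theta_{\rm tilt}$ about the $x$-axis into the pole-aligned global frame, and takes its inner product with $\mathbf{v}_{\rm pole}$ to get $\cos\theta_{\rm target}$; this is exactly the downward component $u_\downarrow$ you compute from the tilted basis vectors, after which both arguments use $\cos\theta_{\rm target}=h_{\rm BS}/d_{3D}$. The differences are cosmetic: you write the local direction vector directly, compute only its vertical component, and add the $u_\downarrow>0$ condition and the special-case sign checks, none of which the paper includes.
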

\begin{IEEEproof}
See Appendix B.
\end{IEEEproof}
Proposition 2 implies that the propagation distance of the signal is related to both the azimuth angle and the elevation angle. The propagation distance, in turn, is associated with the signal propagation loss. Thus, one can express the absolute magnitude loss on propagation as a function of $\psi$ and $\varphi$, i.e.,
\begin{align}\label{ml}
&{M_L}(\psi ,\varphi ) = {10^{1.4}} \times {10^{1.1{{\log }_{10}}\left( {{h_{{\rm{BS}}}}} \right)}} \notag\\
 &\qquad \div {10^{1.1{{\log }_{10}}\left( {\cos \psi \cos \varphi \sin {\theta _{{\rm{tilt}}}} + \sin \varphi \cos {\theta _{{\rm{tilt}}}}} \right)}}\times {f_c}\\
 &\quad= l_{\rm{f}}{f_c} \times {{l_{\rm{d}}}^{{{\log }_{10}}\left[ \frac{{{h_{{\rm{BS}}}}}}{{\cos \psi \cos \varphi \sin {\theta _{{\rm{tilt}}}} + \sin \varphi \cos {\theta _{{\rm{tilt}}}}}} \right]}},\notag
\end{align}      
where $l_{\rm{f}}\approx 25.1$ and $l_{\rm{d}}\approx 12.6$ in the model of 3GPP UMa. The values of $l_{\rm{f}}$ and $l_{\rm{d}}$ should be properly adjusted for other wireless scenarios based on the statistical evaluation.
\begin{figure*}[t]
\centering
\includegraphics[width=7in]{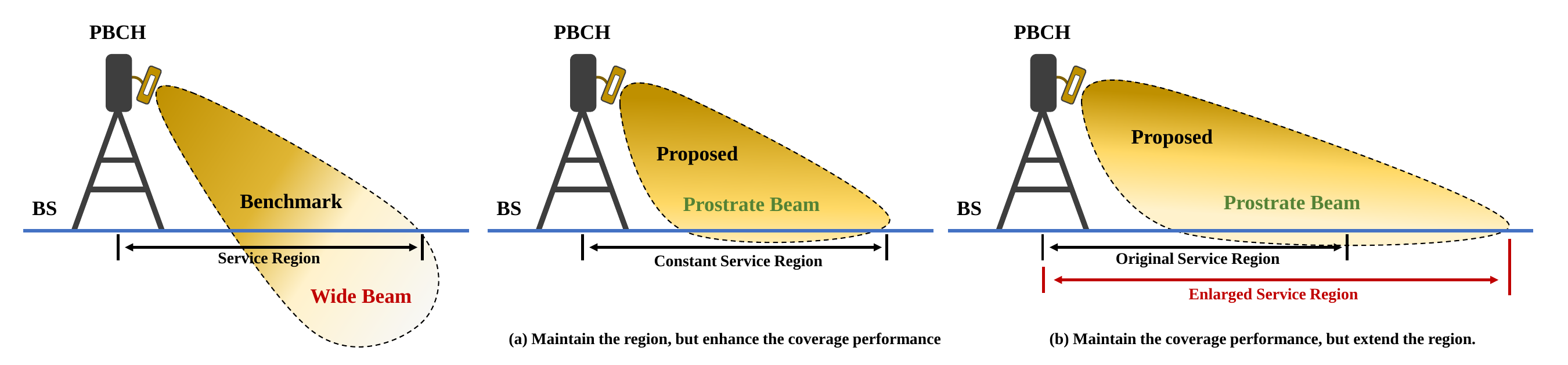}
\caption{Illustration of the expected beam patterns of the wide-beam benchmark and our proposed prostrate-beam designs.}\label{propose}
\vspace{-6pt}
\end{figure*}
\section{Motivation and Beam-Shaping Problem}
In practical systems, to ensure broadcast signal coverage across the service region, it is common to design two uniform wide beams, one in the horizontal direction and the other in the vertical direction, to achieve a 3D coverage\cite{3dcov}. This method is considered the \emph{benchmark}.

\subsection{Motivation for Optimizing the Benchmark}

\begin{figure}[t]
\centering
\includegraphics[width=3.4in]{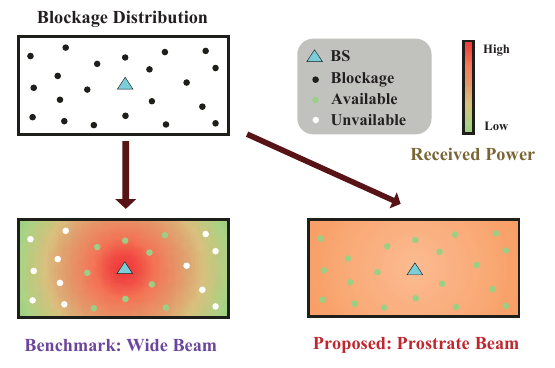}
\caption{Benchmark and proposed schemes in blockage scenarios.}\label{blockage}
\vspace{-6pt}
\end{figure}
For any beamforming vector, the beamforming weights in antenna domain and the radiation response in beam domain constitute a pair of discrete-time Fourier transform (DTFT) transformation. Let us define a pair of virtual angles to characterize the vertical direction and azimuth direction by
\begin{equation}
 {\varphi _v} = \sin \varphi,\;\;{\psi _h} = \cos \varphi \sin \psi   
\end{equation}
with ${\varphi _v},{\psi _h} \in [ - 1,1]$. Then, the beam radiation in the direction of $({\varphi _v},{\psi _h})$ can be rewritten as
\begin{equation}
\hat r({\psi _h},{\varphi _v}) = \sum\nolimits_{n,m} {{{\left[ {{{{\bf{\hat A}}}^*}({\psi _h},{\varphi _v}) \circ {\bf{W}}} \right]}_{n,m}}} 
\end{equation}
where
\begin{equation}
\begin{aligned}
&{\left[ {{\bf{\hat A}}(\psi _h,\varphi _v)} \right]_{n,m}} = \frac{1}{{\sqrt {NM} }}{e^{j\pi (m{\psi _h} + n{\varphi _v})}},\\
&\; n=0,1,...,N-1,\;\;m=0,1,...,M-1.
\end{aligned}
\end{equation}
Then, the \emph{Spatial-Angular Parseval relation} is given by\footnote{A proof is provided in Appendix C for interested readers.}
\begin{equation}\label{SAP}
{\int_{ - 1}^1 {\int_{ - 1}^1 {\left| {\hat r({\psi _h},{\varphi _v})} \right|}^2 } }d{\psi _h}d{\varphi _v} = \frac{4}{MN}{\sum\limits_{n = 0}^{N - 1} {\sum\limits_{m = 0}^{M - 1} {\left| {{{\bf{W}}_{n,m}}} \right|}^2 }} .  
\end{equation}
where ${\varphi _v} = \sin \varphi$ and ${\psi _h} = \cos \varphi \sin \psi$. This indicates that when the total transmission power is fixed, the integral of the beam radiation power remains constant. For the traditional wide beam, the radiation power is uniform within the main lobe, as shown in Fig. \ref{propose}. However, the propagation loss at close range is lower than that at far range, resulting in the received signal power not being evenly distributed across the service region. In particular, the received signal power close to the transmitter may far exceed the 
broadcasting demodulation threshold of the users, e.g., the minimum required reference signal received power (RSRP). Thus, the conventional wide beam may result in imbalanced received power within the service region. By measuring the synchronization signal RSRP (SS-RSRP) over the service region using a 5G UE in a real-world base station scenario, it can be confirmed that this issue does indeed exist under conventional broadcast beams, as presented in Figure 6 of \cite{UErsrp}. 

Motivated by this, one can lower the beam gain at large elevation angles to increase the beam gain at small elevation angles, thereby raising the average received power in the service region (see Fig. \ref{propose}(a)), or make the public message effectively propagate further to extend the service region (see Fig. \ref{propose}(b)). In scenarios where blockages exist or in O2I scenarios, blockages are assumed to be \emph{uniformly distributed} in space following a Poisson point process\cite{Ruiz2021}. To ensure that all blocking points can access the broadcast signal while maximizing the size of the service region, the most effective method is to evenly distribute the received power within the region, as depicted in Fig. \ref{blockage}. This method involves reshaping the conventional wide beam into a prostrate beam, which effectively follows a \emph{Robin Hood} principle, taking excess signal power from the vicinity of the base station and allocating it to distant areas, thus optimizing the coverage profile. It can be demonstrated that, within the same service region, if the proposed approach fails to serve/cover all blocked users, then the benchmark approach will likewise fail to do so.

Following the aforementioned ideas, we aim to achieve a prostrate beam, whose radiation power reaching the ground appears to crawl along the ground. In particular, the prostrate beam proposed mainly requires changes to the benchmark in the vertical direction, i.e., the beam pattern in the vertical direction is no longer a uniform wide beam.\footnote{For ease of exposition, in the following content, we merely discuss the vertical beam design. In Section \ref{secnu}, we will demonstrate that our proposed method can also be applied to 3D beams. }

\subsection{Problem Formulation}
Beam shaping technology first requires a predefined target beam pattern $\mathcal{T}(\psi,\varphi)$, which specifies the magnitude of desired beam radiation in different directions. Then, the general beam shaping problem can be formulated as
\[
({\rm{P}}0): \mathop {\min }\limits_{{{\bf{w}}_v},{{\bf{w}}_h}} \int\limits_\psi  {\int\limits_\varphi  {{{\Big| \; |r(\psi ,\varphi )|-{{\cal T}(\psi ,\varphi ) }\Big|}^2}d\varphi d\psi } },\]
where $|r(\psi ,\varphi )|$ is the achievable magnitude of beam radiation given in (\ref{radr}). The beam shaping technology itself is independent of any specific target beam pattern. Additionally, the 3D beam shaping problem can be reduced to a 2D one if we separately design ${{\bf{w}}_v}$ or ${{\bf{w}}_h}$.  For the case study in this work, we consider shaping a 2D prostrate beam, where the beam gain exactly compensates for the propagation loss, thereby ensuring the received signal power is uniform across the entire service region, i.e.,
\begin{equation}
\mathcal{T}(\varphi ) = \left\{ {\begin{array}{*{20}{c}}
{{M_L}(\varphi ),\;\;\;{\rm{if}}\;\varphi \in {\Pi _c}},\\
{\mu (\varphi ),\;\;\;\;\;{\rm{if}}\;\varphi  \in {\Pi _t},}\\
{\;\;0,\;\;\qquad  {\rm{if}}\;\varphi  \in {\Pi _s},}
\end{array}} \right.    
\end{equation}
where ${\Pi _c}$, ${\Pi _t} = {\Pi _t^+} \cup {\Pi _t^-}$, and ${\Pi _s}$  denote the coverage range, transition range, and stop range,{\footnote{The terms \emph{pass range}, \emph{transition range}, and \emph{stop range} correspond to the \emph{pass band}, \emph{transition band}, and \emph{stop band} for filter design, respectively.}} respectively. ${M_L}(\varphi )$, $\mu (\varphi )$, and $0$ are the absolute value of target beam pattern in the coverage range, transition range, and stop range, respectively. The ranges mentioned are defined as
\begin{align}\label{range}
{\Pi _c} &= \left[ { - \frac{1}{2}{\theta _{{\rm{spread}}}},\frac{1}{2}{\theta _{{\rm{spread}}}}} \right], \notag\\
\Pi _t^ +  &= \left[ {\frac{1}{2}{\theta _{{\rm{spread}}}},\frac{1}{2}\theta _{{\rm{spread}}}^*} \right], \notag \\
\Pi _t^ -  &= \left[ { - \frac{1}{2}\theta _{{\rm{spread}}}^*, - \frac{1}{2}{\theta _{{\rm{spread}}}}} \right], \notag\\
{\Pi _s} &= \left[ { - \frac{\pi }{2}, - \frac{1}{2}\theta _{{\rm{spread}}}^ * } \right] \cup \left[ {\frac{1}{2}\theta _{{\rm{spread}}}^ * ,\frac{\pi }{2}} \right],
\end{align}
where ${\theta _{{\rm{spread}}}^ * }$ has an additional spatial resolution compared to ${\theta _{{\rm{spread}}} }$, which can be obtained by
\begin{equation}
\theta _{{\rm{spread}}}^ *  = \arcsin \left( {\sin {\theta _{{\rm{spread}}}} + \frac{2}{N}} \right).    
\end{equation}
In particular, $\mu (\varphi ) \in \mathbb{R}$ is an auxiliary function with $0 \leq \mu (\varphi ) \leq{M_L}(\varphi ^+)$ in the transition range $\Pi _t^+$, and $0 \leq \mu (\varphi ) \leq{M_L}(\varphi ^-)$ in the transition range $\Pi _t^-$, where $\varphi ^+ = \frac{1}{2}\theta _{{\rm{spread}}}$, $\varphi ^- = -\frac{1}{2}\theta _{{\rm{spread}}}$. Define a function to characterize the loss of beam radiation compared to the target beam pattern as 
\allowdisplaybreaks
\begin{align}
\varepsilon \left( {\bf{w}}_v \right) &\buildrel \Delta \over = \int\limits_{\varphi  \in {\Pi _c}} {\left| {\left| {{{\bf{a}}_v}{{(\varphi )}^H}{\bf{w}}_v} \right| - {M_L}(\varphi )} \right|^2} d\varphi \notag\\
& \quad  + \mathop {\min }\limits_{\mu}\int\limits_{\varphi  \in {\Pi _t} } {\left| {\left| {{{\bf{a}}_v}{{(\varphi )}^H}{\bf{w}}_v} \right| - \mu (\varphi )} \right|^2} d\varphi \\
& \qquad \qquad \qquad + \int\limits_{\varphi  \in {\Pi _s}} {\left| {\left| {{{\bf{a}}_v}{{(\varphi )}^H}{\bf{w}}_v} \right| - 0} \right|^2} d\varphi,\notag
\end{align}
Then, the beam shaping problem for creating a prostrate beam can be formulated as 
\[
\begin{aligned}
({\rm{P}}1): \qquad &\mathop {\min }\limits_{{{\bf{w}}_v}, \mu (\cdot ) } \varepsilon \left( {{{\bf{w}}_v}} \right)\\
&{\rm{s}}.{\rm{t}}.\;{\rm{0}} \le \mu (\varphi ) \le {M_L}(\varphi^+ ),\;\varphi  \in {\Pi _t^+},\\
&\qquad{\rm{0}} \le \mu (\varphi ) \le {M_L}(\varphi^- ),\;\varphi  \in {\Pi _t^-},
\end{aligned}    
\]
in which 
\begin{equation}
{M_L}(\varphi ) = l_f{f_c} \times {{l_d}^{{{\log }_{10}}\left[ \frac{{h_{{\rm{BS}}}}}{\sin (\varphi  + {\theta _{{\rm{tilt}}}})} \right]}}    
\end{equation}
is the magnitude loss without considering $\psi$. In the problem for 3D prostrate-beam design, the value of ${M_L}$ is also influenced by azimuth angle $\psi$, as illustrated in (\ref{ml}). It is observed that the problem formulated does not take the transmit power of the beamforming vector into account. This is because the optimized solution can meet power constraints by scaling up. Furthermore, this scaled-up solution also ensures uniform received power at any location within the service region.

\section{Beam Shaping Optimization}
We propose a new prostrate beam design in this section by solving problem $(\rm{P}1)$. Specifically, we first discretize the objective of the original problem $(\rm{P}1)$ to generate a new problem $(\rm{P}2)$ in terms of matrix representation. Then, we demonstrate through derivation that the problem $(\rm{P}2)$ is equivalent to a TPO problem $(\rm{P}3)$ that optimizes the phases of the target beam pattern.  To formulate a convex optimization, we transform the original phase variables into a vector variable and incorporate their elements into quadratic constraints. By lifting the vector variable to a positive semidefinite matrix and introducing rank-one equivalent constraints, we can iteratively solve the problem by decomposing it into several convex sub-problems using successive convex approximation (SCA) technology.
\subsection{Matrix Representation}
To express the original objective function in a compact matrix form, we define $O$ angle sampling points within $[-\pi/2,\pi/2]$ to characterize the beam pattern in elevation as
\begin{equation} \label{v2}
 {\varphi _o} =  - \frac{\pi }{2} + \frac{{\left( {o - 1} \right)\pi }}{{O - 1}},\;\;o = 1,2,...,O.   
\end{equation}
Then, the objective function of $(\rm{P}1)$ can be approximated as
\allowdisplaybreaks
\begin{align}
\varepsilon_D(\mathbf{w}_v) = & \sum_{\varphi_o \in \Pi_c} \left| \left| \mathbf{a}_v(\varphi_o)^H \mathbf{w}_v \right| - M_L(\varphi_o) \right|^2 \Delta \varphi \nonumber \\
& + \min_{\{\mu_o\}_{o=1}^O} \; \sum_{\varphi_o \in \Pi_t} \left| \left| \mathbf{a}_v(\varphi_o)^H \mathbf{w}_v \right| - \mu_o \right|^2 \Delta \varphi \nonumber \\
& + \sum_{\varphi_o \in \Pi_s} \left| \left| \mathbf{a}_v(\varphi_o)^H \mathbf{w}_v \right| - 0 \right|^2 \Delta \varphi ,
\end{align}
where $\Delta \varphi  = \pi /\left( {O - 1} \right)$ and $\mu _o =\mu (\varphi _o)$. This objective contains two modulus operations, which is difficult to handle directly with existing optimization methods. Therefore, we propose introducing $O$ auxiliary variables, i.e., $\{\kappa_o\}_{o=1}^O$ and constructing a new objective as
\allowdisplaybreaks
\begin{align}
&{{\tilde \varepsilon }_D}\left( {{{\bf{w}}_v}},\{\kappa_o\}_{o=1}^O \right) \notag\\
&= \sum\limits_{{\varphi _o} \in {\Pi _c}} {{{\left| {{{\bf{a}}_v}{{({\varphi _o})}^H}{{\bf{w}}_v} - {M_L}({\varphi _o}){e^{j{\kappa _o}}}} \right|}^2}} \Delta \varphi \notag\\
& \qquad \quad  + \mathop {\min }\limits_{\{\mu _o\}_{o=1}^O}\sum\limits_{{\varphi _o} \in {\Pi _t}} {{{\left| {{{\bf{a}}_v}{{({\varphi _o})}^H}{{\bf{w}}_v} - \mu _o{e^{j{\kappa _o}}}} \right|}^2}} \Delta \varphi \\
& \qquad \qquad \qquad + \sum\limits_{{\varphi _o} \in {\Pi _s}} {{{\left| {{{\bf{a}}_v}{{({\varphi _o})}^H}{{\bf{w}}_v} - 0 \cdot {e^{j{\kappa _o}}}} \right|}^2}} \Delta \varphi .\notag
\end{align}  
It is easy to verify that the following minimization problems are equivalent:
\[\mathop {\min }\limits_{{{\bf{w}}_v}} {\varepsilon _D}\left( {{{\bf{w}}_v}} \right) \quad \Leftrightarrow \mathop {\min }\limits_{{{\bf{w}}_v},\{ {\kappa _o}\} _{o = 1}^O } {{\tilde \varepsilon }_D}\left( {{{\bf{w}}_v},\{ {\kappa _o}\} _{o = 1}^O } \right),\]
because the optimal $\kappa _o$ is the phase of ${{\bf{a}}_v}{{({\varphi _o})}^H}{{\bf{w}}_v}$. To recast the original problem in terms of matrix representation, we need further define an $N \times O$ matrix whose column consists of ${{\bf{a}}_v}{({\varphi _o})}$ for all $o = 1, 2,...,O$, i.e.,
\begin{equation}
{\bf{V}} \buildrel \Delta \over = \left[ {{{\bf{a}}_v}({\varphi _1}),{{\bf{a}}_v}({\varphi _2}),...,{{\bf{a}}_v}({\varphi _O})} \right].
\end{equation}
Next, we define a target-pattern function vector ${\bf{t}}( \{ {\kappa _o, {\mu} _o}\} _{o = 1}^O )$ that consists of the target complex values compared to ${{\bf{a}}_v}{{({\varphi _o})}^H}{{\bf{w}}_v}$ in ${{\tilde \varepsilon }_D}$, in which the $o$-th element
\begin{equation}\label{to3}
{{\bf{t}}}[o] = \left\{ {\begin{aligned}
&{{M_L}({\varphi _o}){e^{j{\kappa _o}}},\;{\rm{if}}\;{\varphi _o} \in {\Pi _c}},\\
&{\mu _o{e^{j{\kappa _o}}},\qquad \;\;{\rm{if}}\;{\varphi _o} \in {\Pi _t}},\\
&{0 \cdot {e^{j{\kappa _o}}},\qquad \; {\rm{if}}\;{\varphi _o} \in {\Pi _s}}.
\end{aligned}} \right.
\end{equation}
By discarding the constant $\Delta \varphi$ and merging the variable function $\mu$, the original optimization problem $(\rm{P}1)$ can be alternatively expressed as 
\[\begin{aligned}
({\rm{P2}}):\quad & \mathop {\min }\limits_{{{\bf{w}}_v},\;\{ {\kappa _o, {\mu} _o}\} _{o = 1}^O  } \left\| {{{\bf{V}}^H}{{\bf{w}}_v} - {\bf{t}}\left( {\{ {\kappa _o, {\mu} _o}\} _{o = 1}^O } \right)} \right\|_F^2\\
&{\rm{s}}.{\rm{t}}.\;0 \le \mu _o \le {M_L}(\varphi^+ ),\;\varphi _o \in {\Pi _t^+},\\
&\qquad 0 \le \mu _o \le {M_L}(\varphi^- ),\;\varphi _o \in {\Pi _t^-}.
\end{aligned}\]
\subsection{Equivalent TPO Problem}
One can notice that given $\{ {\kappa _o, {\mu} _o}\} _{o = 1}^O $, the optimal solution of ${{\bf{w}}_v}$ can be obtained in closed form as 
\begin{equation} 
{\bf{w}}_v^{{\rm{opt}}} = {\left( {{\bf{V}}{{\bf{V}}^H}} \right)^{ - 1}}{\bf{Vt}}\left( {\{ {\kappa _o, {\mu} _o}\} _{o = 1}^O } \right), 
\end{equation}
where we assume that $O > N$ and thus ${{\bf{V}}{{\bf{V}}^H}}$ is invertible. By substituting this solution into the problem $(\rm{P}2)$, the remaining task is to optimize the phases of the target pattern in all ranges, i.e., $\{ {\kappa _o}\} _{o = 1}^O $, as well as the magnitudes of that in transition range, i.e., $\{ \mu _o\;|\;{\varphi _o} \in {\Pi _t}\} $. We refer to this kind of issue as a TPO problem, i.e.,
\allowdisplaybreaks
\begin{align*}
({\rm{P3}}):& \mathop {\min }\limits_{\{ {\kappa _o, {\mu} _o}\} _{o = 1}^O  } \left\| {\left[ {{{\bf{V}}^H}{{\left( {{\bf{V}}{{\bf{V}}^H}} \right)}^{ - 1}}{\bf{V}} - {\bf{I}}} \right]{\bf{t}}\left( {\{ {\kappa _o, {\mu} _o}\} _{o = 1}^O } \right)} \right\|_F^2\\
&\quad {\rm{s}}.{\rm{t}}.\;{\rm{0}} \le \mu _o \le {M_L}(\varphi^+ ),\;\varphi _o  \in {\Pi _t^+},\\
&\quad \qquad{\rm{0}} \le \mu _o \le {M_L}(\varphi^- ),\;\varphi _o  \in {\Pi _t^-}.
\end{align*}    
Since ${{{\bf{V}}^H}{{\left( {{\bf{V}}{{\bf{V}}^H}} \right)}^{ - 1}}{\bf{V}}}$ is a Hermitian matrix, i.e., 
\begin{equation}
{{\bf{V}}^H}{\left( {{\bf{V}}{{\bf{V}}^H}} \right)^{ - 1}}{\bf{V}} = {\left[ {{{\bf{V}}^H}{{\left( {{\bf{V}}{{\bf{V}}^H}} \right)}^{ - 1}}{\bf{V}}} \right]^H},    
\end{equation}
the objective function of $(\rm{P}3)$ can be rewritten as
\begin{align}
&\left\| {\left[ {{{\bf{V}}^H}{{\left( {{\bf{V}}{{\bf{V}}^H}} \right)}^{ - 1}}{\bf{V}} - {\bf{I}}} \right]{\bf{t}}\left( {\{ {\kappa _o, {\mu} _o}\} _{o = 1}^O } \right)} \right\|_F^2\\
 &= {\bf{t}}{\left( {\{ {\kappa _o, {\mu} _o}\} _{o = 1}^O } \right)^H}{{\bf{V}}^H}{\left( {{\bf{V}}{{\bf{V}}^H}} \right)^{ - 1}} \times {\bf{V}}{{\bf{V}}^H}{\left( {{\bf{V}}{{\bf{V}}^H}} \right)^{ - 1}}  \notag\\
& \;\; \times {\bf{Vt}}{\left( {\{ {\kappa _o, {\mu} _o}\} _{o = 1}^O } \right)^H} - 2{\bf{t}}{\left( {\{ {\kappa _o, {\mu} _o}\} _{o = 1}^O } \right)^H}{{\bf{V}}^H}{\left( {{\bf{V}}{{\bf{V}}^H}} \right)^{ - 1}} \notag\\
& \quad \times {\bf{Vt}}{\left( {\{ {\kappa _o, {\mu} _o}\} _{o = 1}^O } \right)} + {\bf{t}}{\left( {\{ {\kappa _o, {\mu} _o}\} _{o = 1}^O } \right)^H}{\bf{t}}\left( {\{ {\kappa _o, {\mu} _o}\} _{o = 1}^O } \right) \notag\\
& = {\bf{t}}{\left( {\{ {\kappa _o, {\mu} _o}\} _{o = 1}^O } \right)^H}{\bf{t}}\left( {\{ {\kappa _o, {\mu} _o}\} _{o = 1}^O } \right) \notag\\
& \qquad- {\bf{t}}{\left( {\{ {\kappa _o, {\mu} _o}\} _{o = 1}^O } \right)^H}{{\bf{V}}^H}{\left( {{\bf{V}}{{\bf{V}}^H}} \right)^{ - 1}}{\bf{Vt}}{\left( {\{ {\kappa _o, {\mu} _o}\} _{o = 1}^O } \right)^H}. \notag
\end{align}
By defining the matrix 
\begin{equation}
 {\bf{B}} = {\bf{I}} - {{\bf{V}}^H}{\left( {{\bf{V}}{{\bf{V}}^H}} \right)^{ - 1}}{\bf{V}},   
\end{equation}
the problem $(\rm{P}3)$ can be simplified as 
\[
\begin{aligned}
({\rm{P4}}):\;\;&\mathop {\min }\limits_{\{{ {\kappa _o, \mu _o}\} _{o = 1}^O}  } {\bf{t}}{\left( {\{ {\kappa _o, {\mu} _o}\} _{o = 1}^O } \right)^H}{\bf{Bt}}\left( {\{ {\kappa _o, {\mu} _o}\} _{o = 1}^O } \right)\\
&\quad {\rm{s}}.{\rm{t}}.\;{\rm{0}} \le \mu _o \le {M_L}(\varphi^+ ),\;\varphi _o \in {\Pi _t^+},\\
&\quad \qquad{\rm{0}} \le \mu _o \le {M_L}(\varphi^- ),\;\varphi _o \in {\Pi _t^-}.
\end{aligned}
\]
It can be observed that problem (P4) is a simplified TPO problem with a quadratic objective, and all its constraints are convex functions. In the TPO problem, we only need to consider how to optimize the parameters of the target-pattern function vector $\{{\kappa _o, {\mu} _o}\} _{o = 1}^O $. Although the objective function is convex with respect to the target-pattern function vector, it is non-convex with respect to the parameters to be optimized. To solve for these parameters, we try to construct an optimization problem for the target-pattern function vector and incorporate the constant modulus property of its elements as constraints.

\subsection{Convexification Attempts}
Replacing the function vector ${\bf{t}}\left( {\{ {\kappa _o, {\mu} _o}\} _{o = 1}^O } \right)$  with a variable vector  $\bf{t}$, and treating the function structure in (\ref{to3}) as quadratic constraints on the elements of the variable vector $\bf{t}$, we can reformulate $({\rm{P4}})$ as a quadratically
constrained quadratic program (QCQP):
\[
\begin{aligned}
({\rm{P5}}):\quad & \mathop {\min }\limits_{\bf{t}} {{\bf{t}}^H}{\bf{Bt}}\\
&{\rm{s}}.{\rm{t}}.\;{\left| {{{\bf{t}}_i}} \right|^2} = {M_L}{({\varphi _i})^2}\;{\rm{with}}\;{\varphi _i} \in {\Pi _c},\\
&\quad \;\;\; {\left| {{{\bf{t}}_j}} \right|^2} \le {M_L}{({\varphi ^+})^2}\;{\rm{with}}\;{\varphi _j} \in {\Pi _t^+},\\
&\quad \;\;\; {\left| {{{\bf{t}}_j}} \right|^2} \le {M_L}{({\varphi ^-})^2}\;{\rm{with}}\;{\varphi _j} \in {\Pi _t^-},\\
&\quad \;\;\; {\left| {{{\bf{t}}_k}} \right|^2} = 0\;{\rm{with}}\;{\varphi _k} \in {\Pi _s},
\end{aligned}
\]
where the original variables in $(\rm{P}4)$ are implicit in the variable vector $\bf{t}$. To reformulate the constraints into convex forms, we lift the variable vector into a positive semidefinite matrix, i.e.,
\begin{equation}
{\bf{T}} = {\bf{t}}{{\bf{t}}^H}\;\;{\rm{and}}\;\; {\rm{rank}}({\bf{T}})=1.    
\end{equation}
In particular, the rank-one constraint is equivalent to
\begin{equation}
{\left\| {\bf{T}} \right\|_*} - {\left\| {\bf{T}} \right\|_2} = 0,    
\end{equation}
where ${\left\| \cdot \right\|_*}$ represents the nuclear norm which equals to the summation of all singular value and ${\left\| \cdot \right\|_2}$ represents the spectral norm which equals to the largest singular value. By imposing this constraint as a penalty with a weight of $\rho$, we can reformulate $(\rm{P}5)$ as
\begin{align*}
({\rm{P6}}):\;&\mathop {\min }\limits_{\bf{T}} {\rm{Tr}}\left( {{\bf{TB}}} \right) + \rho \left[ {{\rm{Tr}}\left( {\bf{T}} \right) - {{\left\| {\bf{T}} \right\|}_2}} \right]\\
&{\rm{s}}.{\rm{t}}.\;\;{{\bf{T}}_{i,i}} = {M_L}{({\varphi _i})^2}\;{\rm{with}}\;{\varphi _i} \in {\Pi _c},\\
&\qquad  {{\bf{T}}_{j,j}} \le {M_L}{({\varphi ^+})^2}\;{\rm{with}}\;{\varphi _j} \in {\Pi _t^+},\\
&\qquad  {{\bf{T}}_{j,j}} \le {M_L}{({\varphi ^-})^2}\;{\rm{with}}\;{\varphi _j} \in {\Pi _t^-},\\
&\qquad {{\bf{T}}_{k,k}} = 0\;{\rm{with}}\;{\varphi _k} \in {\Pi _s},
\end{align*}

where we replace ${\left\| {\bf{T}} \right\|_*}$ with ${\rm{Tr}}\left( {\bf{T}} \right)$ since ${\bf{T}}$ is a positive semidefinite matrix. Since $\text{Tr}(\mathbf{T}) - \|\mathbf{T}\|_2$ is always greater than or equal to zero, it will act as a penalty by increasing the objective function when the rank-1 constraint is not satisfied. It can be observed that in the current optimization problem, all components in the objective function and constraint function have become convex, except for $\mathop {\min }\nolimits_{\bf{T}}  - {{\left\| {\bf{T}} \right\|}_2}$.

\subsection{SCA Decomposition}
Next, we use the SCA technique to decompose the problem  $(\rm{P}6)$  into several sub-problems and solve them iteratively.  The basic idea of SCA is to approximate the non-convex objective function with a convex function at each optimization step. Let $f({\bf{T}}) = {\rm{Tr}}\left( {{\bf{TB}}} \right) + \rho \left[ {{\rm{Tr}}\left( {\bf{T}} \right) - {{\left\| {\bf{T}} \right\|}_2}} \right]$. The steps of the SCA algorithm can be described as follows:
\begin{enumerate}
    \item Choose an initial point ${\bf{T}}_0$ in step 0.
    \item At step $k$, construct a convex approximation ${f_k}({\bf{T}};{{\bf{T}}_k})$ that matches the function value at ${\bf{T}}_k$, i.e., 
    \[
     {f_k}({\bf{T}};{{\bf{T}}_k})= f({{\bf{T}}_k}),
    \]
    and is an upper bound on $f({\bf{T}})$ in the neighborhood of ${\bf{T}}_k$.
    \item Solve the following convex optimization problem at step $k$:
     \[{{\bf{T}}_{k + 1}} = \arg \mathop {\min }\limits_{\bf{T}} {f_k}({\bf{T}};{{\bf{T}}_k}).\]
    \item  If a certain convergence criterion is met, stop. Otherwise, return to step 2.
\end{enumerate}
Following above description, we construct a convex approximation ${f_k}({\bf{T}};{{\bf{T}}_k})$ at step $k$ as
\begin{equation}
f_k\left( {{\bf{T}};{{\bf{T}}_k}} \right)  =  {\rm{Tr}}\left( {{\bf{TB}}} \right) + \rho {\rm{Tr}}\left( {\bf{T}} \right) - \rho {h_k}\left( {{\bf{T}};{{\bf{T}}_k}} \right),
\end{equation}
where ${{{\bf{T}}_k}}$ is a constant matrix obtained in the $(k-1)$-th optimization step. ${h_k}\left( {{\bf{T}};{{\bf{T}}_k}} \right)$ is the first-order Taylor approximation of ${{{\left\| {\bf{T}} \right\|}_2}}$, which can be written as
\begin{equation}
\begin{aligned}
{h_k}\left( {{\bf{T}};{{\bf{T}}_k}} \right) &= {\left\| {{{\bf{T}}_k}} \right\|_2} + {\mathop{\rm Re}\nolimits} \left( {{\rm{Tr}}\left[ {{\partial _{{{\bf{T}}_k}}}{{\left\| {\bf{T}} \right\|}_2}\left( {{\bf{T}} - {{\bf{T}}_k}} \right)} \right]} \right)\\
 &= {\left\| {{{\bf{T}}_k}} \right\|_2} + {\mathop{\rm Re}\nolimits} \left( {{\rm{Tr}}\left[ {{{\bf{q}}_k}{\bf{q}}_k^H\left( {{\bf{T}} - {{\bf{T}}_k}} \right)} \right]} \right)\\
 &= {\left\| {{{\bf{T}}_k}} \right\|_2} +   {{\bf{q}}_k^H\left( {{\bf{T}} - {{\bf{T}}_k}} \right){{\bf{q}}_k}},
\end{aligned}
\end{equation}
 where ${{\bf{q}}_k}$ is the singular vector corresponding to the largest singular value of ${{{\bf{T}}_k}}$, and the second equation holds due to the fact of ${\partial _{{{\bf{T}}_k}}}{\left\| {\bf{T}} \right\|_2} = {{\bf{q}}_k}{\bf{q}}_k^H$. Then, the solution obtained in the $k$-th optimization step is used for objective parameter in the $(k+1)$-th step, i.e., 
\begin{equation}\label{tk1}
{{\bf{T}}_{k + 1}} = \arg \mathop {\min }\limits_{\bf{T}} {\rm{Tr}}\left( {{\bf{TB}}} \right) + \rho {\rm{Tr}}\left( {\bf{T}} \right) - \rho {h_k}\left( {{\bf{T}};{{\bf{T}}_k}} \right)    .
\end{equation}
By discarding the constant term in (\ref{tk1}) and some matrix manipulations, the optimization sub-problem in the $k$-th step can be written as 
\begin{align}
({\rm{P6}})_k:\;&{{\bf{T}}_{k + 1}} = \arg \mathop {\min }\limits_{\bf{T}} {\rm{Tr}}\left( {{\bf{TC}}_k} \right)\nonumber\\
&{\rm{s}}.{\rm{t}}.\;{{\bf{T}}_{i,i}} = {M_L}{({\varphi _i})^2}\;{\rm{with}}\;{\varphi _i} \in {\Pi _c}\nonumber\\
&\qquad  {{\bf{T}}_{j,j}} \le {M_L}{({\varphi ^+})^2}\;{\rm{with}}\;{\varphi _j} \in {\Pi _t^+},\nonumber\\
&\qquad  {{\bf{T}}_{j,j}} \le {M_L}{({\varphi ^-})^2}\;{\rm{with}}\;{\varphi _j} \in {\Pi _t^-},\nonumber\\
&\qquad {{\bf{T}}_{k,k}} = 0\;{\rm{with}}\;{\varphi _k} \in {\Pi _k},\nonumber
\end{align}
where ${\bf{C}}_k = {\bf{B}} + \rho \left( {{\bf{I}} - {{\bf{q}}_k}{\bf{q}}_k^H} \right)$. Supposing that $R$ is the total number of iterations and ${{\bf{T}}_R} = {\bf{U\Lambda }}{{\bf{U}}^H}$ is the eigen-decomposition
of the solution to $(\rm{P}6)$, the latent prostrate-beam solution in $(\rm{P}1)$ can be recovered as 
\begin{equation}
{{\bf{w}}_v} = \beta {\left( {{\bf{V}}{{\bf{V}}^H}} \right)^{ - 1}}{\bf{VU}}\left[ {:,1} \right]\sqrt {{{\bf{\Lambda }}_{1,1}}},   
\end{equation}
in which  ${\bf{U}}\left[ {:,1} \right]$ is the first column of ${\bf{U}}$, ${{{\bf{\Lambda }}_{1,1}}}$ is the element in the first column and first row of ${{{\bf{\Lambda }}}}$, and $\beta$ is a power scaling factor that should be $1$ in the problem $(\rm{P}1)$. For prostrate beam with transmit power $P$, the power scaling vector can be set to 
\begin{equation}
\beta  = \frac{{\sqrt P }}{{{{\left\| {{{\left( {{\bf{V}}{{\bf{V}}^H}} \right)}^{ - 1}}{\bf{VU}}\left[ {:,1} \right]\sqrt {{{\bf{\Lambda }}_{1,1}}} } \right\|}_F}}}.
\end{equation}

\subsection{Overall Algorithm and Convergence Analysis}

\renewcommand{\algorithmicrequire}{\textbf{Input:}}
\renewcommand{\algorithmicensure}{\textbf{Output:}}
\begin{algorithm}[t]
	\caption{Overall Algorithm for Prostrate-Beam Design}    
	 \label{xx}       
	\begin{algorithmic}[1]
	\Require  $N$, $O$, $l_f$, $l_d$, $f_c$, $h_{BS}$, $d_{\rm{max}}$, $\theta _{\rm{tilt}}$, $\rho$, ${\bf{T}}_0$.
    \vspace{1pt}
    \State  Define $ {\varphi _o} =  - \frac{\pi }{2} + \frac{{\left( {o - 1} \right)\pi }}{{O - 1}},\;\;o = 1,2,...,O.$  
    \vspace{1pt}
    \State Define ${{\bf{a}}_v}(\varphi ) \!=\! \frac{1}{{\sqrt N }}\left[ {1,{e^{j\pi \sin \varphi }},...,{e^{j\pi (N - 1)\sin \varphi }}} \right]^T$.
    \vspace{1pt}
    \State Let ${\bf{V}} = \left[ {{{\bf{a}}_v}({\varphi _1}),{{\bf{a}}_v}({\varphi _2}),...,{{\bf{a}}_v}({\varphi _O})} \right]$.
    \vspace{1pt}
    \State Calculate ${\theta _{{\rm{spread}}}} = 2\left[ {\arctan \left( {\frac{{{d_{{\rm{max}}}}}}{{{h_{{\rm{BS}}}}}}} \right) + {\theta _{{\rm{tilt}}}}} \right] - \pi$.
    \vspace{1pt}
    \State Calculate $\varphi ^+ = \frac{1}{2}\theta _{{\rm{spread}}}$ and $\varphi ^- = -\frac{1}{2}\theta _{{\rm{spread}}}$.
    \vspace{1pt}
    \State Define ${\Pi _c}$, ${\Pi _t^+}$, ${\Pi _t^-}$, and ${\Pi _s}$ based on (\ref{range}).
    \vspace{1pt}
    \State Calculate ${M_L}(\varphi _o) = l_f{f_c} \!\times\! {{l_d}^{{{\log }_{10}}\left[ {{h_{{\rm{BS}}}}/\sin (\varphi _o  + {\theta _{{\rm{tilt}}}})} \right]}}, \; \forall o$.
    \vspace{1pt}
    \State Let $ {\bf{B}} = {\bf{I}} - {{\bf{V}}^H}{\left( {{\bf{V}}{{\bf{V}}^H}} \right)^{ - 1}}{\bf{V}}$.
    \State {\bf{for}}  {$k = 0 : R$} {\bf{do}} 
    \vspace{1pt}
    \State Calculate ${{{\bf{q}}_{k}}}$, i.e., the singular vector corresponding to the largest singular value of ${{\bf{T}}_{k}}$.
    \vspace{1pt}
    \State Let ${\bf{C}}_k = {\bf{B}} + \rho \left( {{\bf{I}} - {{\bf{q}}_k}{\bf{q}}_k^H} \right)$.
    \vspace{1pt}
    \State Solve the following convex optimization problem \[\begin{aligned}
{{\bf{T}}_{k + 1}} = & \arg \mathop {\min }\limits_{\bf{T}} {\rm{Tr}}\left( {{\bf{TC}}_k} \right)\\
&{\rm{s}}.{\rm{t}}.\;{{\bf{T}}_{i,i}} = {M_L}{({\varphi _i})^2}\;{\rm{with}}\;{\varphi _i} \in {\Pi _c}\\
&\qquad  {{\bf{T}}_{j,j}} \le {M_L}{({\varphi ^+})^2}\;{\rm{with}}\;{\varphi _j} \in {\Pi _t^+},\\
&\qquad  {{\bf{T}}_{j,j}} \le {M_L}{({\varphi ^-})^2}\;{\rm{with}}\;{\varphi _j} \in {\Pi _t^-},\\
&\qquad {{\bf{T}}_{k,k}} = 0\;{\rm{with}}\;{\varphi _k} \in {\Pi _k}.
\end{aligned}\]
   \State {\bf{end for}} 
   \State Operate the eigen-decomposition ${{\bf{T}}_R} = {\bf{U\Lambda }}{{\bf{U}}^H}$.
   \vspace{1pt}
    \Ensure  ${{\bf{w}}_v} = {\left( {{\bf{V}}{{\bf{V}}^H}} \right)^{ - 1}}{\bf{VU}}\left[ {:,1} \right]\sqrt {{{\bf{\Lambda }}_{1,1}}}.$
    \end{algorithmic}
\end{algorithm}

The comprehensive steps of the optimization-based approach are delineated in Algorithm 1. To obtain the prostrate beam in the elevation direction, we first define input parameters: the number of vertical antenna elements $N$, beamspace sampling points $O$, frequency loss coefficient $l_f$, distance loss coefficient $l_d$, base station height $h_{\text{BS}}$, maximum service distance $d_{\max}$, array downtilt angle $\theta_{\text{tilt}}$, penalty parameter $\rho$, and initial optimization value $\mathbf{T}_0$. In steps 1-3, we set the direction of the sampling points, define the array steering vector, and obtain the array steering matrix. In steps 4-7, we calculate the beam spread required for precise coverage of the service region through Proposition 1, and obtain the coverage range, transition range, stop range of the prostrate beam, and the beam gain (in magnitude) required in different directions. Steps 8-12 implement our customized SCA algorithm for iterative prostrate beam optimization. At iteration $R$, we perform eigen-decomposition on $\mathbf{T}_R$ to extract the dominant eigenvector and obtain our expected prostrate beam pattern $\mathbf{w}_v$. In particular, the total iteration count $R$ can be established empirically or ascertained by a convergence criterion, i.e., stopping iteration when the increase in the objective value falls below $\epsilon$. Let $g_k$ denote the objective value of $({\rm{P6}})_k$ in the $k$-th iteration. One can see that
\begin{align}\label{conver}
{g_k} &=  {\rm{Tr}}\left( {{\bf{T}_{k+1}}{{\bf{C}}_k}} \right)\\
&={\rm{ Tr}}\left( {{{\bf{T}}_{k + 1}}{\bf{B}}} \right) + \rho {\rm{Tr}}\left( {{{\bf{T}}_{k + 1}}} \right) - \rho {h_k}\left( {{{\bf{T}}_{k + 1}};{{\bf{T}}_k}} \right). \notag\\
&\mathop  \ge \limits^{(a)} {\rm{Tr}}\left( {{{\bf{T}}_{k + 1}}{\bf{B}}} \right) + \rho {\rm{Tr}}\left( {{{\bf{T}}_{k + 1}}} \right) - \rho {h_k}\left( {{{\bf{T}}_{k + 1}};{{\bf{T}}_{k + 1}}} \right)\notag\\
&\mathop  \ge \limits^{(b)} \mathop {\min }\limits_{\bf{T}} {\rm{Tr}}\left( {{\bf{TB}}} \right) + \rho {\rm{Tr}}\left( {\bf{T}} \right) - \rho {h_k}\left( {{\bf{T}};{{\bf{T}}_{k + 1}}} \right) =g_{k+1}, \notag
\end{align}
where the inequality $(a)$ is due to the fact ${h_k}\left( {{\bf{T}};{{\bf{T}}_k}} \right) \le {\left\| {\bf{T}} \right\|_2}$, as ${h_k}\left( {{\bf{T}};{{\bf{T}}_k}} \right)$ is the first-order Taylor
approximation of the convex function ${\left\| {\bf{T}} \right\|_2}$. Thus, we have 
\begin{equation}
{h_k}\left( {{{\bf{T}}_{k + 1}};{{\bf{T}}_k}} \right) \le {\left\| {{{\bf{T}}_{k + 1}}} \right\|_2} = {h_k}\left( {{{\bf{T}}_{k + 1}};{{\bf{T}}_{k + 1}}} \right).    
\end{equation}
The inequality $(b)$ is due to the fact that the objective of a feasible solution is no less than that of the optimal solution. Consequently, according to (\ref{conver}), each iteration of the optimization is non-increasing and the convergence is guaranteed. The CVX solver is utilized to manage the computational complexity, which is $\mathcal{O}({RN^{5.5}+O^3})$. It is important to note that the computational complexity exerts minimal influence on the design of the prostrate beam, as the codewords are predefined in an offline codebook, with only online selection performed in centimeter-wave G-MIMO systems.

\section{Numerical Results}\label{secnu}

\begin{figure}[t]
\centering
\includegraphics[width=3.3in]{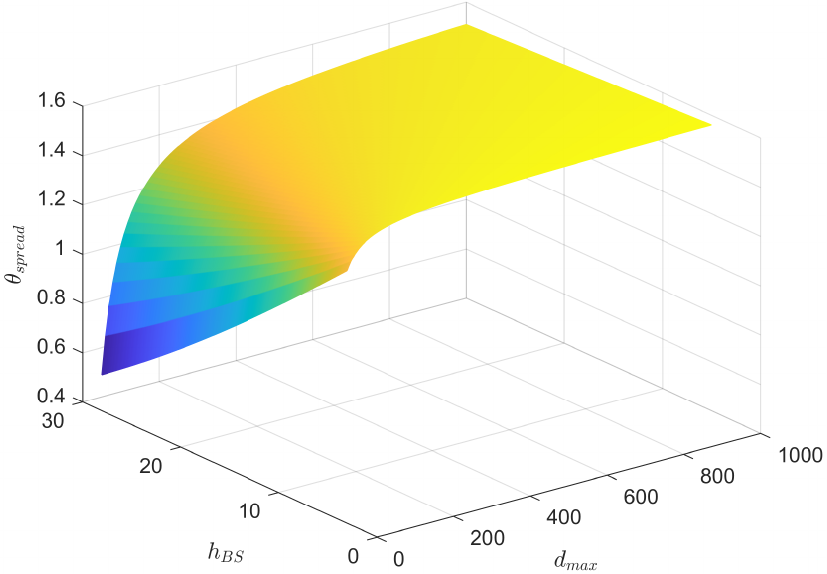}
\caption{The relationship between the base station height, beam spread, and maximum service distance.}\label{proposition}
\vspace{-8pt}
\end{figure}

\begin{figure}[t]
\centering
\includegraphics[width=3.5in]{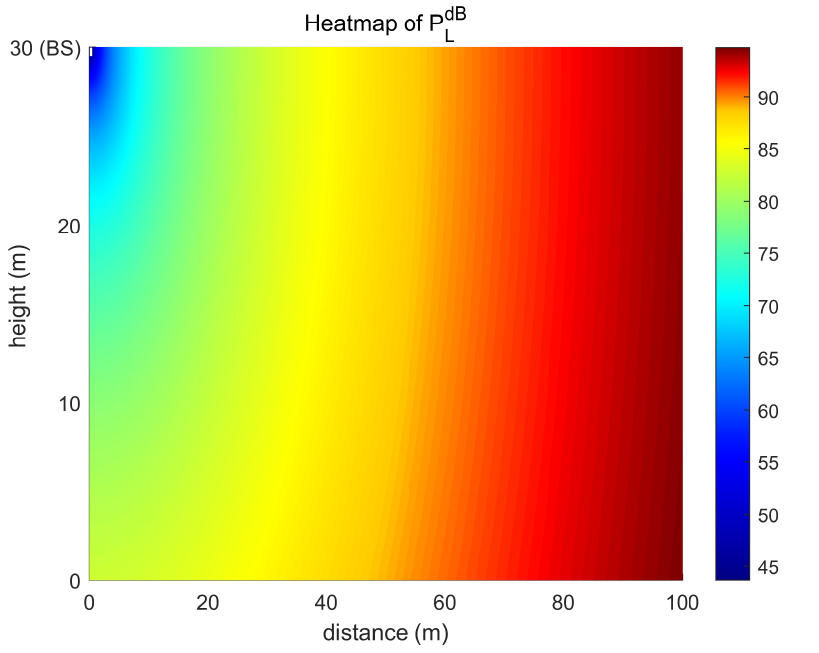}
\caption{Propagation loss vs. user's distance and height.}\label{heatmap}
\vspace{-8pt}
\end{figure}

In this section, we present numerical results to demonstrate the performance of the traditional wide beam and the proposed prostrate beam, enabled by beam shaping technology. The traditional wide beam is constructed following the approach of S-SARV\cite{trad4}. We begin by defining the symmetrical array response vector, which differs from the conventional array response vector by assuming the signal phase at the center of a uniform linear array (ULA) is zero, rather than at the first antenna. This vector is expressed as
\begin{align}
 \hat{\mathbf{a}}_{N}(\varphi) = \frac{1}{\sqrt{N}} &\left[ e^{j\pi\left(-\frac{N-1}{2}\right)\sin\varphi}, \ldots\right. \\
&\qquad \left. \ldots, e^{j\pi n\sin\varphi}, \ldots, e^{j\pi\left(\frac{N-1}{2}\right)\sin\varphi} \right]^{T},\notag   
\end{align} 
where \(n\) takes values from \(-\frac{(N-1)}{2}+1\) to \(\frac{(N-1)}{2}-1\). A set of \(O\) angles (with \( \geq 2N\)) is then defined as
\begin{equation}
\begin{aligned}
{\varphi _o} = \arcsin &\Bigg[ {\sin {\theta ^ - } - \frac{1}{N}}\\
&  \qquad+ \frac{{(o - 1)\left( {\sin {\theta ^ + } - \sin {\theta ^ - } + \frac{2}{N}} \right)}}{{O - 1}} \Bigg]
\end{aligned}
\end{equation}
for \(o = 1, 2, \ldots, O\). Here, \(\theta^{-}\) and \(\theta^{+}\) represent the lower and upper angular bounds of the intended coverage region, and \(N\) is the number of transmit antennas. Based on this setup, a wide beamforming weight vector is constructed by directly summing the symmetrical array response vectors corresponding to these sampled angles, i.e.,
\begin{equation}
\mathbf{w}_{\text{S-SARV}} = \beta \sum_{o=1}^{O} \hat{\mathbf{a}}_{N}(\varphi_{o}),    
\end{equation}
where the normalization factor \(\beta\) is given by
\begin{equation}
\beta = \frac{\sqrt{N\left(\sin\theta^{+} - \sin\theta^{-} + \frac{2}{N}\right)}}{\sqrt{2} \cdot \left\| \sum_{o=1}^{O} \hat{\mathbf{a}}_{N}(\varphi_{o}) \right\| }.   
\end{equation}

\begin{figure}[t]
\centering
\includegraphics[width=3in]{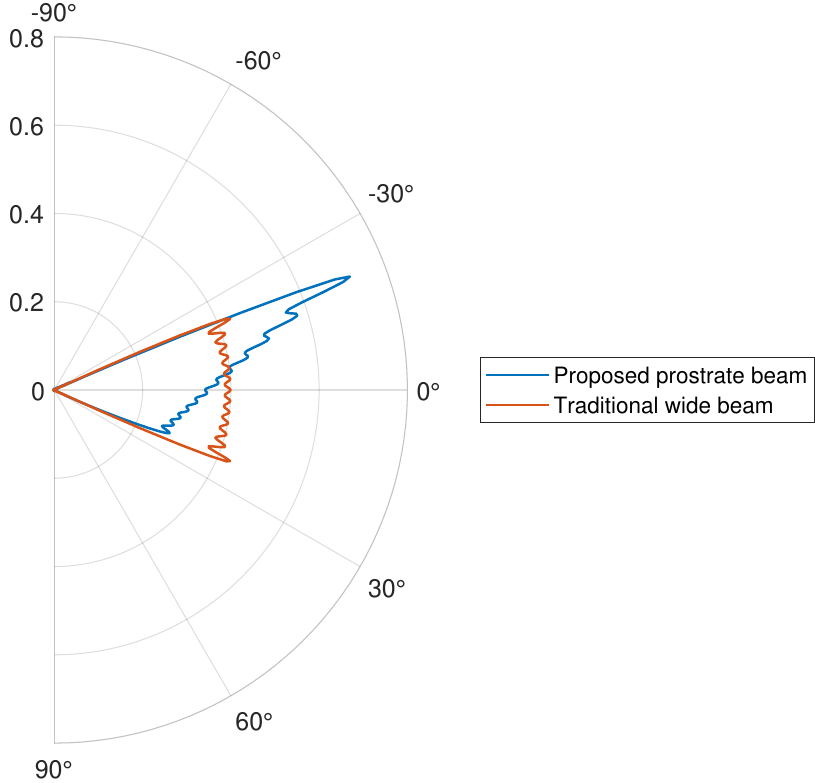}
\caption{The beam pattern of prostrate beam and traditional wide beam.}
\label{beampattern}
\vspace{-8pt}
\end{figure}

\begin{figure}[t]
\centering
\includegraphics[width=3.1in]{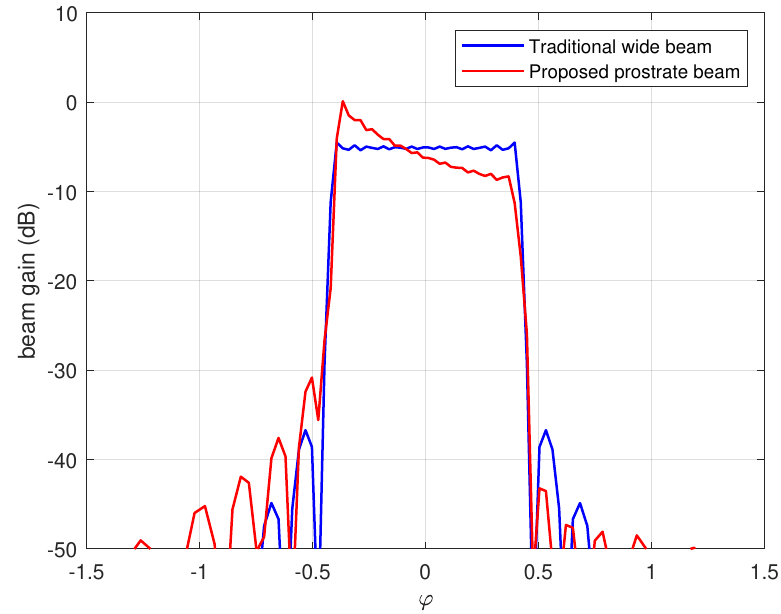}
\caption{Beam gain versus different elevation angle.}\label{beamcom}
\vspace{-8pt}
\end{figure}

\begin{figure}[t]
\centering
\includegraphics[width=3.2in]{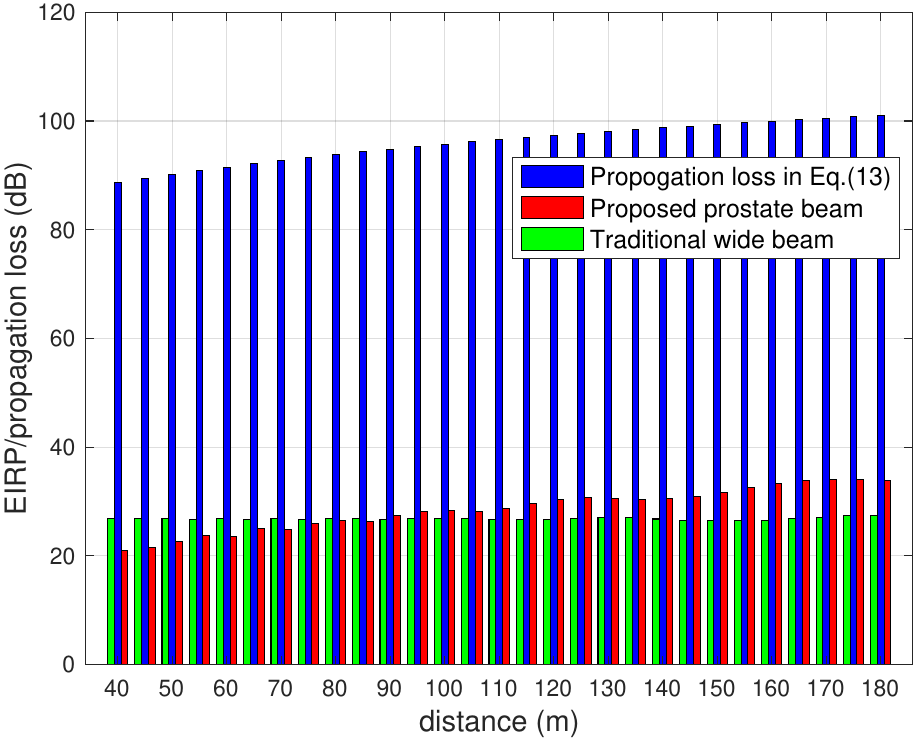}
\caption{Propagation loss and EIRP across different distances.}
\label{barchart}
\vspace{-8pt}
\end{figure}

\begin{figure}[t]
\centering
\includegraphics[width=3.2in]{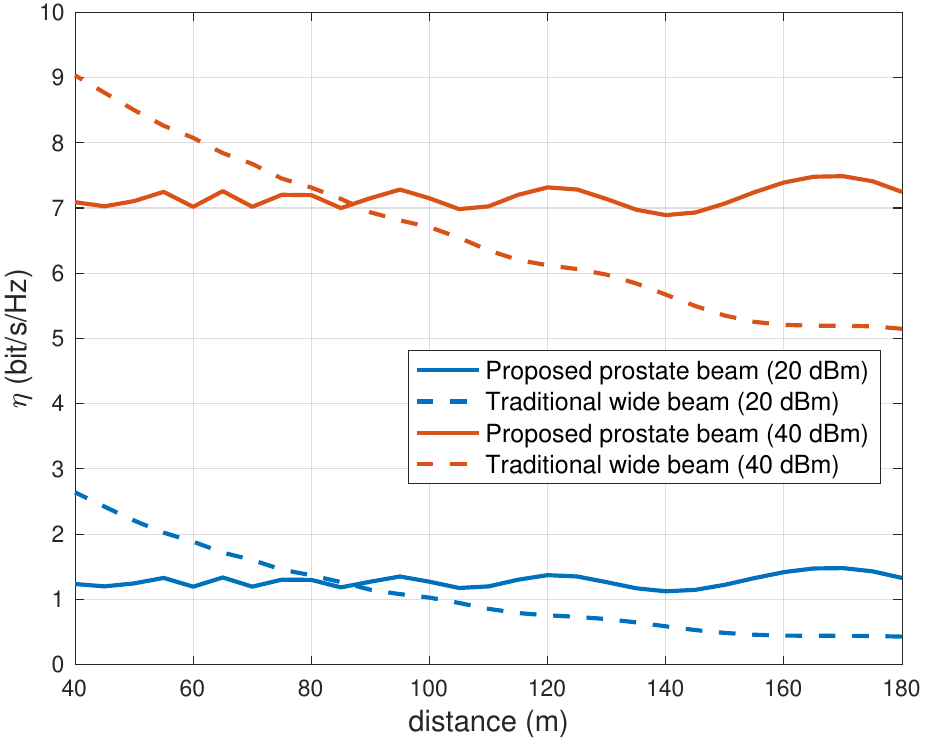}
\caption{Spectral efficiency across different distances.}
\label{efficiency}
\vspace{-8pt}
\end{figure}

To the best of our knowledge, among existing wide-beam design methods, the S-SARV method offers the best overall trade-off between pattern performance and computational complexity, making it a practical and efficient choice for wide-beam synthesis in large-scale antenna systems\cite{ning2023beamforming}. \emph{As this work is not a comparative study of wide-beam designs, we employ this single benchmark as a reference. We note that other existing techniques produce beams with similar shapes, making their inclusion unnecessary for our core analysis.} Unless otherwise specified, the simulation parameters are set as follows: the number of antenna elements is \(N=64\), the carrier frequency is \(f_c=13\ \text{GHz}\), the base station height is \(h_\text{BS}=30\ \text{m}\), the tilt angle is \(\theta_\text{tilt}=\pi/4\), the number of sampling points is \(O=2N=128\), the penalty factor is \(\rho=100\), and the number of SCA iterations is $10$.
\begin{figure*}[t] 
\centering
\includegraphics[width=0.94\textwidth]{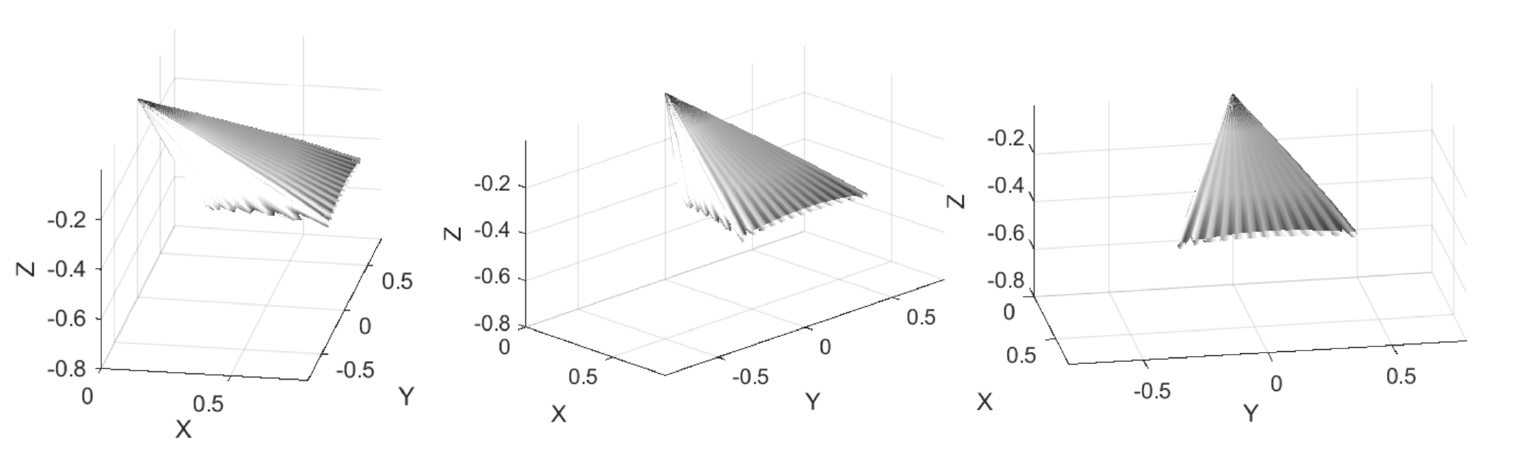} 
\caption{Illustration of 3D beam shaping via the Kronecker product of a vertical prostrate beam and a horizontal wide beam.}
\label{3dplot}
\includegraphics[width=6.8in]{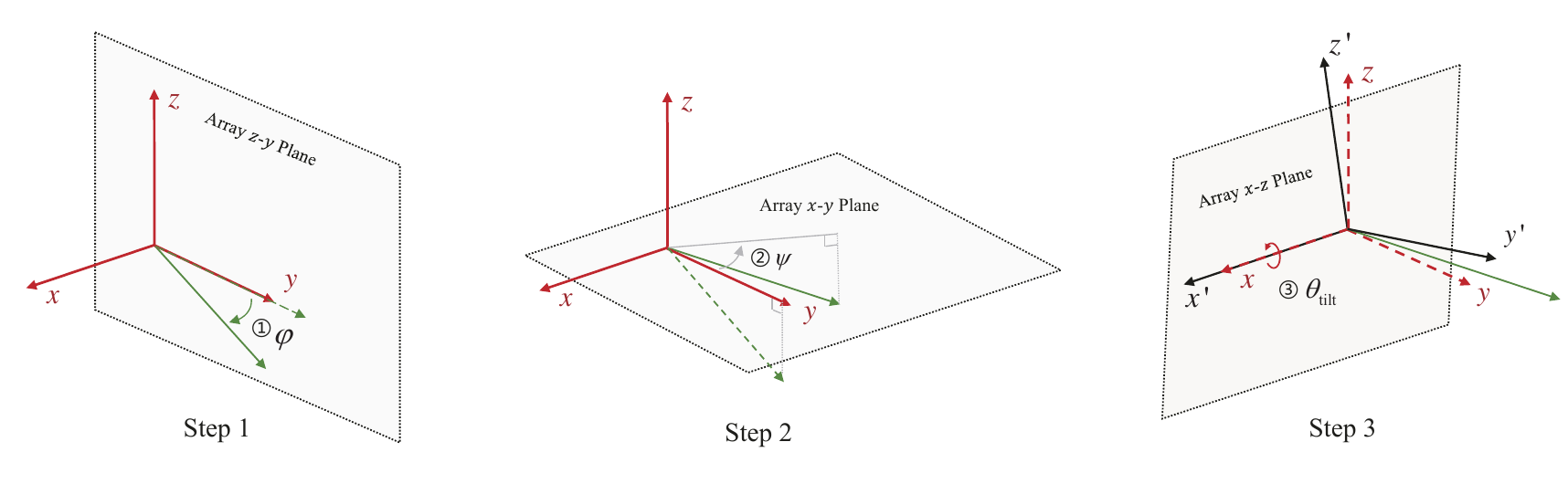}
\caption{Three steps to calculate the direction of a target point in the global coordinate system.}\label{coordinate}
\vspace{-8pt}
\end{figure*}

Fig. \ref{proposition} plots the relationship between the base station height \(h_{\text{BS}}\), beam spread \(\theta_{\text{spread}}\), and maximum service distance \(d_{\text{max}}\). The simulation results show that, under a fixed antenna tilt of \(\theta_{\text{tilt}}=\pi/4\), the coverage distance increases with both the base station height and the beam spread. For a given height, \(d_{{max}}\) grows rapidly as \(\theta_{\text{spread}}\) increases, particularly within the range of \(0^\circ\) to \(60^\circ\), after which the increase becomes more gradual. Similarly, for a fixed beam spread, higher BS installations extend service coverage almost linearly, with a slope determined by \(\theta_{\text{spread}}\). The largest \(d_{\text{max}}\) values (approaching the limit of $1000$ m) are achieved with tall arrays and wide beam spreads (e.g., \(h_{\text{BS}}=30\) m, \(\theta_{\text{spread}} \approx 86^\circ\)). These trends highlight the critical role of adjusting both physical base station height and angular beam spread to meet targeted coverage requirements in practical deployments.

Fig. \ref{heatmap} presents a heatmap of the propagation loss versus the user's horizontal distance and height, with the base station deployed at a height of $30$ m and the user's distance ranging from $0$ to $100$ m.  It can be observed that at the same height, the propagation loss increases with distance. Similarly, at a fixed distance, the loss gradually increases as the height decreases. Notably, the rate of increase in loss is most rapid at distance = $0$ m and slows down as the distance increases. At ground level, the radiation loss varies significantly across different distances—ranging from $84$ dB at 0 m to $95$ dB at $100$ m. Interestingly, the propagation loss from the base station in the vertical direction is remarkably high. At zero horizontal distance and ground level, the path loss is about $84$ dB, while with the base station fixed at $30$ m height, moving the user from $0$ m to $30$ m in distance results in only about $3$ dB additional loss.

Fig. \ref{beampattern} illustrates the beam patterns of the proposed prostrate beam and the traditional wide beam in polar coordinates. We simulated the case with a beam spread of $\pi/4$ and normalized the transmit power by letting $P=1$. As shown, the side lobes are nearly invisible in this beam pattern representation because both beams effectively suppress sidelobes, allowing us to focus primarily on the differences in the main lobe shape. Furthermore, while the beam gain of the prostrate beam decreases at a progressively slower rate with increasing angle within the main lobe in the dB plot, its gain in the beam pattern exhibits a linear decline. Given that the array's tilt angle is set to $45$ degrees, if we rotate the beam pattern clockwise by $45$ degrees, we can observe that the beam energy advances forward while closely skimming the ground. This effectively demonstrates that the intended ``prostrate" characteristic has been successfully achieved.

Fig. \ref{beamcom} illustrates the beam gain of the traditional wide beam and the proposed prostrate beam. This visualization is equivalent to converting the linear beam gain into dB-scale values, allowing us to visually and intuitively evaluate the performance of the compared beams. It can be observed that the coverage range for both beams, where beam gain exceeds $-20$ dB, is essentially equal to the beam spread. Notably, the traditional wide beam maintains a nearly uniform gain within its coverage area, remaining stable at approximately $-5$ dB, while the proposed prostrate beam exhibits a gradual decrease in beam gain as the angle increases across the coverage range. Since $\varphi = 0$ corresponds to the boresight direction of the array, $\varphi < 0$ points toward the sky, and $\varphi > 0$ points toward the ground, the proposed prostrate beam can provide higher beam gain in farther directions. Moreover, the proposed beam shows no significant dips within the coverage range, indicating no coverage blind spots.

Fig. \ref{barchart} presents the propagation loss and the effective isotropic radiated power (EIRP) of different schemes across various distances. Here, EIRP is defined as the transmit power multiplied by the beam gain for the target point at a certain distance. After fixing the base station height, array tilt angle, and beam spread, the coverage range of the beam corresponds to ground distances from $40$ m to $180$ m, over which the propagation loss is also plotted. During the design of the prostrate beam, we calculated the minimum total EIRP required to achieve gains matching the sum of squares of the $M_L$ values in \ref{ml} across all angles within the coverage range, and we set the total EIRP slightly higher than this minimum value. To ensure a fair comparison, the total EIRP of the traditional wide beam and the proposed prostrate beam is identical, meaning the total area of the green bars (representing one scheme) and the red bars (representing the other) is the same. As can be seen, the EIRP of the proposed prostrate beam generally increases in tandem with the propagation loss. Assuming a demodulation threshold of $-70$ dB, the prostrate beam ensures that the EIRP can overcome the propagation loss across the entire coverage range. In contrast, the traditional wide beam can only cover distances from $40$ m to $100$ m, which is less than $50\%$ of the coverage capability.

Fig.~\ref{efficiency} illustrates the spectral efficiency as a function of the communication distance. In our analysis, the receive array gain is set to $G_r = 4$~dB, and the received power is determined by the standard link budget equation, i.e.,
\begin{equation}
P_r = \mathrm{EIRP} + G_r - P_{L}^{\mathrm{dB}},
\end{equation}
where $P_{L}^{\mathrm{dB}}$ denotes the total path loss in dB. The noise power at the receiver input is primarily governed by the receiver noise figure (NF) and the system bandwidth $B$. It is calculated as
\begin{equation}
N^{\mathrm{dBm}} = -174 + 10\log_{10}(B) + NF,
\end{equation}
with $B$ in Hz and $NF$ in dB. Accordingly, the signal-to-noise ratio (SNR) is obtained via ${\rm{SNR}} = P_r - N^{\mathrm{dBm}}$ (in dB). For the numerical evaluation, we adopt a bandwidth of $400$~MHz and a noise figure of $3$~dB. The corresponding spectral efficiency is then derived from the Shannon capacity formula, i.e., $\eta = \log_2(1 + 10^{\rm{SNR}/10})$ in bit/s/Hz, which serves as the upper bound for the achievable data rate per unit bandwidth. We compare the spectral efficiency under transmit powers of $20$ dBm and $40$ dBm, respectively. It can be observed that our proposed prostrate beam enables cell-edge users to achieve significantly higher link performance, and the gain becomes more pronounced as the distance increases.




Fig. \ref{3dplot} illustrates the 3D prostrate beam constructed using our proposed beam-shaping method. As noted in Remark 3, the method presented in this work can be applied not only to ULAs for generating customized 2D beams, but also to UPAs for obtaining customized 3D beams. To achieve this, one only needs to design the beamforming vectors for the horizontal and vertical ULAs separately, and then take their Kronecker product to form a beamforming matrix for exciting the UPA antenna array. For the 3D beam in Fig. \ref{3dplot}, we first obtain a prostrate beam with a beam spread of $\pi/4$ using the proposed method. This beam is then multiplied via the Hadamard product with an array steering vector oriented at $\pi/4$ to generate a downward-projected prostrate beam. Finally, the 3D prostrate beam is produced by taking the Kronecker product with a conventional wide beam. Three different viewing angles are captured by rotating the beam horizontally to demonstrate its 3D radiation pattern. It can be observed that the beam gain is uniformly distributed along the y-axis, while increasing along the x-axis as the distance grows. The bottom of the beam is nearly parallel to the ground, confirming the effectiveness of our design.

\section{Conclusion}
In this paper, we addressed the design of broadcasting beams in centimeter-wave G-MIMO communication systems and proposed a beam shaping technique capable of achieving arbitrary desired coverage patterns, provided that the array possessed sufficient spatial degrees of freedom. This beam shaping approach enabled scenario-specific broadcast beam design, particularly allowing customization in the horizontal direction based on user distribution or obstacle layouts.

In the vertical direction, the ground represented a consistent environmental factor in most scenarios. Leveraging beam shaping, we designed a \emph{prostrate beam} whose radiation energy closely followed the ground surface. This design enhanced the coverage range compared to traditional wide beams. However, it required that the total beam energy exceed a theoretical full-coverage threshold; otherwise, the prostrate beam could result in zero coverage, underperforming even traditional beams. In such cases, no beam would be able to achieve full coverage across the intended service area.

Since broadcast beams are typically implemented through codebook-based scanning and are often optimized offline, there is no stringent requirement on computational complexity. Therefore, the beam shaping method proposed in this paper is widely applicable to most centimeter-wave G-MIMO communication systems that rely on broadcast mechanisms. Future research can focus on beam shaping technologies with additional degrees of freedom or constraints. Promising directions include the joint optimization of antenna positions and weights for movable antenna systems, as well as beam shaping with constant modulus constraints tailored for analog array architectures.
\begin{appendices}    

\section{Proof of Proposition 1}
Based on the geometric relationships presented in Fig. \ref{model}, it is easy to obtain that
\begin{equation}
{d_{{\rm{max}}}} = {h_{{\rm{BS}}}} \cdot \tan \left( {\frac{{{\theta _{{\rm{spread}}}} + \pi }}{2} - {\theta _{{\rm{tilt}}}}} \right).    
\end{equation}
Then, the results in Proposition 1 can be derived through trigonometric functions and equation transformation.

\section{Proof of Proposition 2}

Given the array's coordinate system is $x-y-z$, we further consider a global coordinate system $x'-y'-z'$, where the direction of $x'$ aligns with $x$, and the direction of $z'$ aligns with the BS pole. In the global coordinate system, the direction of the BS pole towards the ground can be represented as 
\begin{equation}
 {\bf{v}}_{\rm{pole}}=[0,0,-1]^T. 
\end{equation}
Next, we calculate the target direction in the global coordinate system via three steps as shown in Fig. {\ref{coordinate}}. In the first two steps, we derive the target direction in the array's coordinate system. In the last step, we derive the target direction in the global coordinate system. Specifically, let consider a reference direction starting from the $y$-axis, i.e., 
\begin{equation}
 {\bf{v}}_0=[0,1,0]^T. 
\end{equation}
By rotating the reference direction clockwise by $\varphi$ around the $x$-axis, the new direction of the reference becomes
\begin{equation}
{{\bf{v}}_1} = \left[ {\begin{array}{*{20}{c}}
1&0&0\\
0&{\cos \varphi }&{\sin \varphi }\\
0&{ - \sin \varphi }&{\cos \varphi }
\end{array}} \right]{{\bf{v}}_0} = \left[ {\begin{array}{*{20}{c}}
0\\
{\cos \varphi }\\
{ - \sin \varphi }
\end{array}} \right].
\end{equation}
Then, rotating the reference direction counterclockwise by $\psi$ around the $z$-axis, the new reference direction becomes
\begin{equation}
{{\bf{v}}_2} \!=\! \left[ {\begin{array}{*{20}{c}}
{\cos \psi }&{ - \sin \psi }&0\\
{\sin \psi }&{\cos \psi }&0\\
0&0&1
\end{array}} \right]{{\bf{v}}_1} \!=\! \left[ {\begin{array}{*{20}{c}}
{ - \sin \psi \cos \varphi }\\
{\cos \psi \cos \varphi }\\
{ - \sin \varphi }
\end{array}} \right].
\end{equation}
Finally, by rotating the array's coordinate system counterclockwise by $\theta _{\rm{tilt}}$ around the $x$-axis, we obtain the global coordinate system. The reference direction in the global coordinate system is given by
\allowdisplaybreaks
\begin{align}
{{\bf{v}}_3} &= \left[ {\begin{array}{*{20}{c}}
1&0&0\\
0&{\cos {\theta _{{\rm{tilt}}}}}&{\sin {\theta _{{\rm{tilt}}}}}\\
0&{ - \sin {\theta _{{\rm{tilt}}}}}&{\cos {\theta _{{\rm{tilt}}}}}
\end{array}} \right]{{\bf{v}}_2}  \notag \\
 &= \left[ {\begin{array}{*{20}{c}}
{ - \sin \psi \cos \varphi }\\
{\cos \psi \cos \varphi \cos {\theta _{{\rm{tilt}}}} - \sin \varphi \sin {\theta _{{\rm{tilt}}}}}\\
{ - \cos \psi \cos \varphi \sin {\theta _{{\rm{tilt}}}} - \sin \varphi \cos {\theta _{{\rm{tilt}}}}}
\end{array}} \right].
\end{align}
Thus, the angle between the BS pole and the target direction can be calculated using 
\begin{equation}
\begin{aligned}
{\theta _{{\rm{target}}}} &= \arccos \left( {{\bf{v}}_3^T{{\bf{v}}_{{\rm{pole}}}}} \right)\\
 &= \arccos \left( {\cos \psi \cos \varphi \sin {\theta _{{\rm{tilt}}}} + \sin \varphi \cos {\theta _{{\rm{tilt}}}}} \right)
\end{aligned}   
\end{equation}
By leveraging the fact that
\begin{equation}
\cos\left( {{\theta _{{\rm{target}}}}} \right) = \frac{{{h_{{\rm{BS}}}}}}{{{d_{3D}}}},
\end{equation}
we can obtain the results given in Proposition 2.

\section{Proof of the relation in \eqref{SAP}}
The integration of the radiation power in the beam domain can be computed by
\begin{equation}
\begin{aligned}
&{\int_{ - 1}^1 {\int_{ - 1}^1 {\left| {\hat r({\psi _h},{\varphi _v})} \right|} ^2 }}d{\psi _h}d{\varphi _v}\\
 &= {\int_{ - 1}^1 {\int_{ - 1}^1 {\left| {\sum\nolimits_{n,m} {{{\left[ {{{{\bf{\hat A}}}^*}({\psi _h},{\varphi _v}) \circ {\bf{W}}} \right]}_{n,m}}} } \right|} } ^2}d{\psi _h}d{\varphi _v}\\
 &= {\int_{ - 1}^1 {\int_{ - 1}^1 {\left| {\sum\limits_{m = 0}^{M - 1} {{e^{jm\pi{\psi _h}}}} \sum\limits_{n = 0}^{N - 1} {{e^{jn\pi{\varphi _v}}}} {{\bf{W}}_{n,m}}} \right|} } ^2}\frac{{d{\psi _h}d{\varphi _v}}}{{MN}}.
\end{aligned}   
\end{equation}
Define an auxiliary function as 
\begin{equation}
f\left( {{\varphi _v},m} \right) = \sum\limits_{n = 0}^{N - 1} {{e^{jn\pi{\varphi _v}}}} {{\bf{W}}_{n,m}}.
\end{equation}
Then, the integration can be further calculated as
\begin{equation}\label{ap1}
\begin{aligned}
&{\int_{ - 1}^1 {\int_{ - 1}^1 {\left| {\sum\limits_{m = 0}^{M - 1} {{e^{jm\pi{\psi _h}}}} \sum\limits_{n = 0}^{N - 1} {{e^{jn\pi{\varphi _v}}}} {{\bf{W}}_{n,m}}} \right|} } ^2}\frac{{d{\psi _h}d{\varphi _v}}}{{MN}}\\
 &= \int_{ - 1}^1 {\left[ {{{\int_{ - 1}^1 {\left| {\sum\limits_{m = 0}^{M - 1} {{e^{jm\pi{\psi _h}}}} f\left( {{\varphi _v},m} \right)} \right|} }^2}d{\psi _h}} \right]} \frac{{d{\varphi _v}}}{{MN}}.
\end{aligned}   
\end{equation}
The expression in the square bracket can be expanded as 
\begin{align}\label{ap2}
&{\int_{ - 1}^1 {\left| {\sum\limits_{m = 0}^{M - 1} {{e^{j\pi m{\psi _h}}}} f\left( {{\varphi _v},m} \right)} \right|} ^2}d{\psi _h} \notag\\
 &= \int_{ - 1}^1 {\sum\limits_{p = 0}^{M - 1} {\sum\limits_{q = 0}^{M - 1} {{f^*}\left( {{\varphi _v},p} \right)f\left( {{\varphi _v},q} \right)} } } {e^{j\pi (q - p){\psi _h}}}d{\psi _h} \notag\\
 &= \sum\limits_{m = 0}^{M - 1} {{f^*}\left( {{\varphi _v},m} \right)f\left( {{\varphi _v},m} \right)} \int_{ - 1}^1 {1d{\psi _h}}\notag \\
 & \quad + \underbrace {\sum\limits_{p = 0}^{M - 1} {\sum\limits_{q \ne p}^{M - 1} {{f^*}\left( {{\varphi _v},p} \right)f\left( {{\varphi _v},q} \right)d{\psi _h}} } \int_{ - 1}^1 {{e^{j\pi (q - p){\psi _h}}}d{\psi _h}} }_{ = 0,\;{\rm{ since }}\int_{ - 1}^1 {{e^{j\pi (q - p){\psi _h}}}d{\psi _h}}  = 0}\notag\\
& = 2\sum\limits_{m = 0}^{M - 1} {{{\left| {f\left( {{\varphi _v},m} \right)} \right|}^2}}. 
\end{align}
Substituting (\ref{ap2}) into (\ref{ap1}), we have 
\begin{equation}
\begin{aligned}
&\int_{ - 1}^1 {\left[ {{{\int_{ - 1}^1 {\left| {\sum\limits_{m = 0}^{M - 1} {{e^{j\pi m{\psi _h}}}} f\left( {{\varphi _v},m} \right)} \right|} }^2}d{\psi _h}} \right]} \frac{{d{\varphi _v}}}{{MN}}\\
& = 2\int_{ - 1}^1 {\sum\limits_{m = 0}^{M - 1} {{{\left| {f\left( {{\varphi _v},m} \right)} \right|}^2}} } \frac{{d{\varphi _v}}}{{MN}}\\
& = 2\sum\limits_{m = 0}^{M - 1} {\int_{ - 1}^1 {{{\left| {\sum\limits_{n = 0}^{N - 1} {{e^{jn\pi{\varphi _v}}}} {{\bf{W}}_{n,m}}} \right|}^2}} \frac{{d{\varphi _v}}}{{MN}}} \\
&\mathop  = \limits^{(a)} \frac{4}{{MN}}{\sum\limits_{n = 0}^{N - 1} {\sum\limits_{m = 0}^{M - 1} {\left| {{{\bf{W}}_{n,m}}} \right|^2}}},
\end{aligned}
\end{equation}
where Eq. (a) comes from the similar derivation provided in (\ref{ap2}), which completes the proof.

\end{appendices}

\bibliographystyle{IEEEtran}
\bibliography{IEEEabrv, mybib, jp_refs}

\begin{IEEEbiography}[{\includegraphics[width=1in,clip]{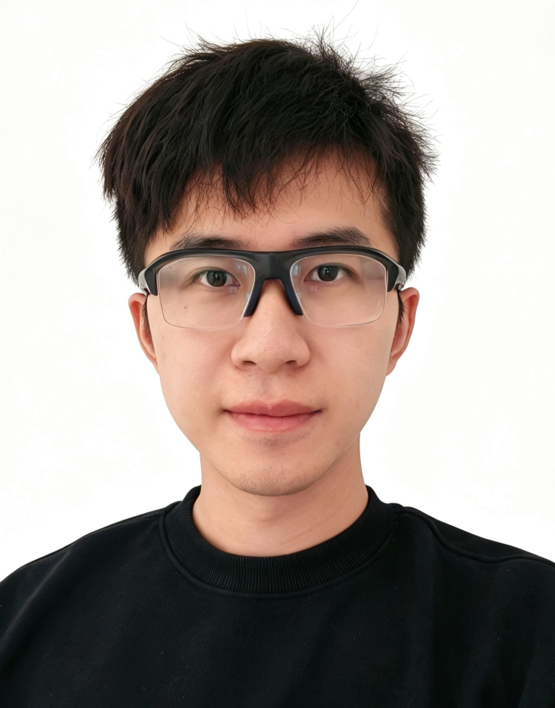}}]{Boyu Ning} (Member, IEEE) received the B.S. degree from the Yingcai Honors College, University of Electronic Science and Technology of China (UESTC), Chengdu, China, in 2018, and the Ph.D. degree from the National Key Laboratory of Wireless Communications, UESTC, in 2023. He was a Visiting Scholar with the Department of Electrical and Computer Engineering, National University of Singapore, from 2022 to 2023. He was a Principal Engineer with the Wireless Networks System Research Department II, Huawei Technologies Co., Ltd., Chengdu, China, from 2023 to 2026. He is currently a Postdoctoral Fellow with the Department of Communication Systems, KTH Royal Institute of Technology, Stockholm, Sweden. He has experience in the communication industry, including the evolution of 5G base station architectures, the baseband algorithms design for 5G NR, and the standardization for 3GPP Release 19/20. His research interests include massive MIMO, deep learning, Bayesian estimation, compressed sensing, and tensor decomposition.

Dr. Ning has published over 100 top-tier IEEE journal and conference papers, including 5 ESI hot papers and 13 ESI highly cited papers. He has secured over 40 national patents related to 3GPP standardization. He was a recipient of the National Scholarship for Ph.D. Students in 2022, Huawei's Top Minds in 2023, the Outstanding Doctoral Dissertation Award from the China Education Society of Electronics in 2024, and the MSCA Postdoctoral Fellowship from the European Commission in 2026. He has been listed among the World's Top 2\% Scientists by Stanford University since 2025. He has served as an industry panel moderator and a workshop chair for various global conferences, including IEEE Globecom, ICC, VTC, ICCC, and WCNC.
\end{IEEEbiography}

\begin{IEEEbiography}[{\includegraphics[width=1in,clip]{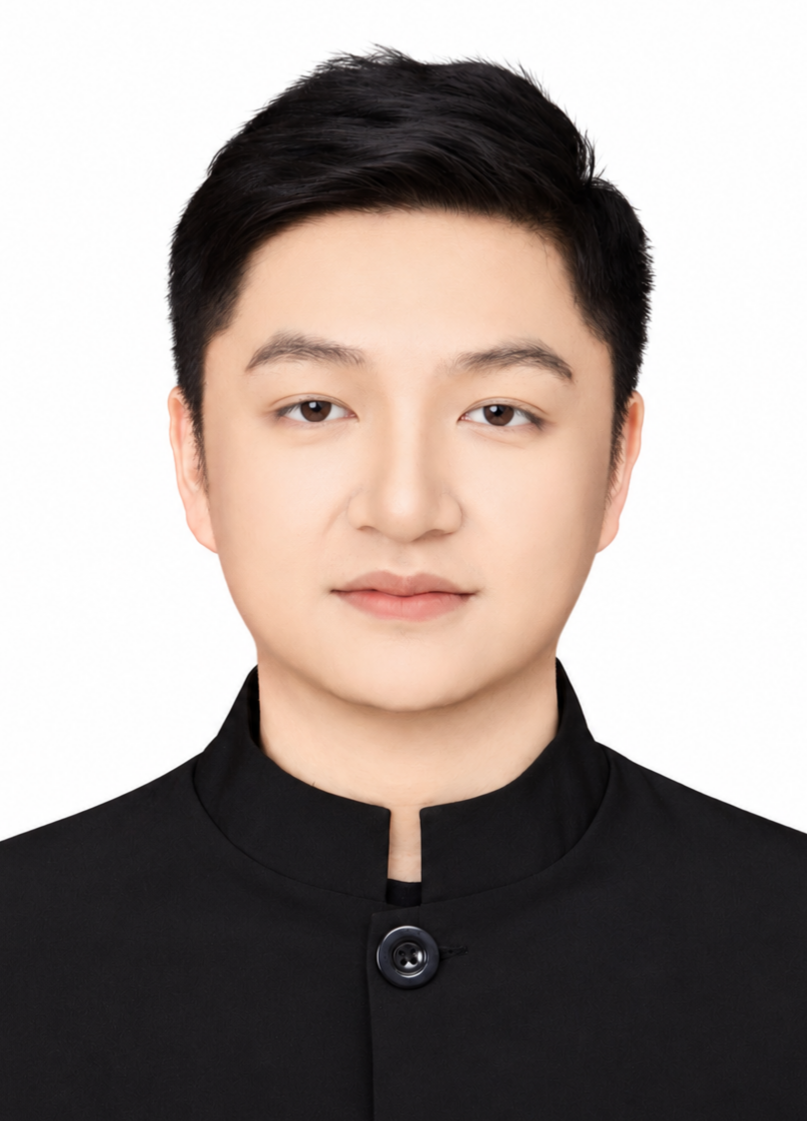}}]{Yike Xie} (Student Member, IEEE)  received the B.S. degree in automation engineering from the School of Automation Engineering, University of Electronic Science and Technology of China (UESTC), Chengdu, China. He is currently pursuing the Ph.D. degree in information and communication engineering with the National Key Laboratory of Wireless Communications, UESTC. He is also a visiting student with the Department of Electronical and Computer Engineering, National University of Singapore. His research interests include terahertz communications, movable antennas and massive MIMO. He was the co-recipients of the National Scholarship from MOE of the People's Republic of China in 2019 and the IEEE ComSoc SPCC Technical Committee Student Challenge and Video Contest Award in 2024.
\end{IEEEbiography}

\begin{IEEEbiography}[{\includegraphics[width=1in,clip]{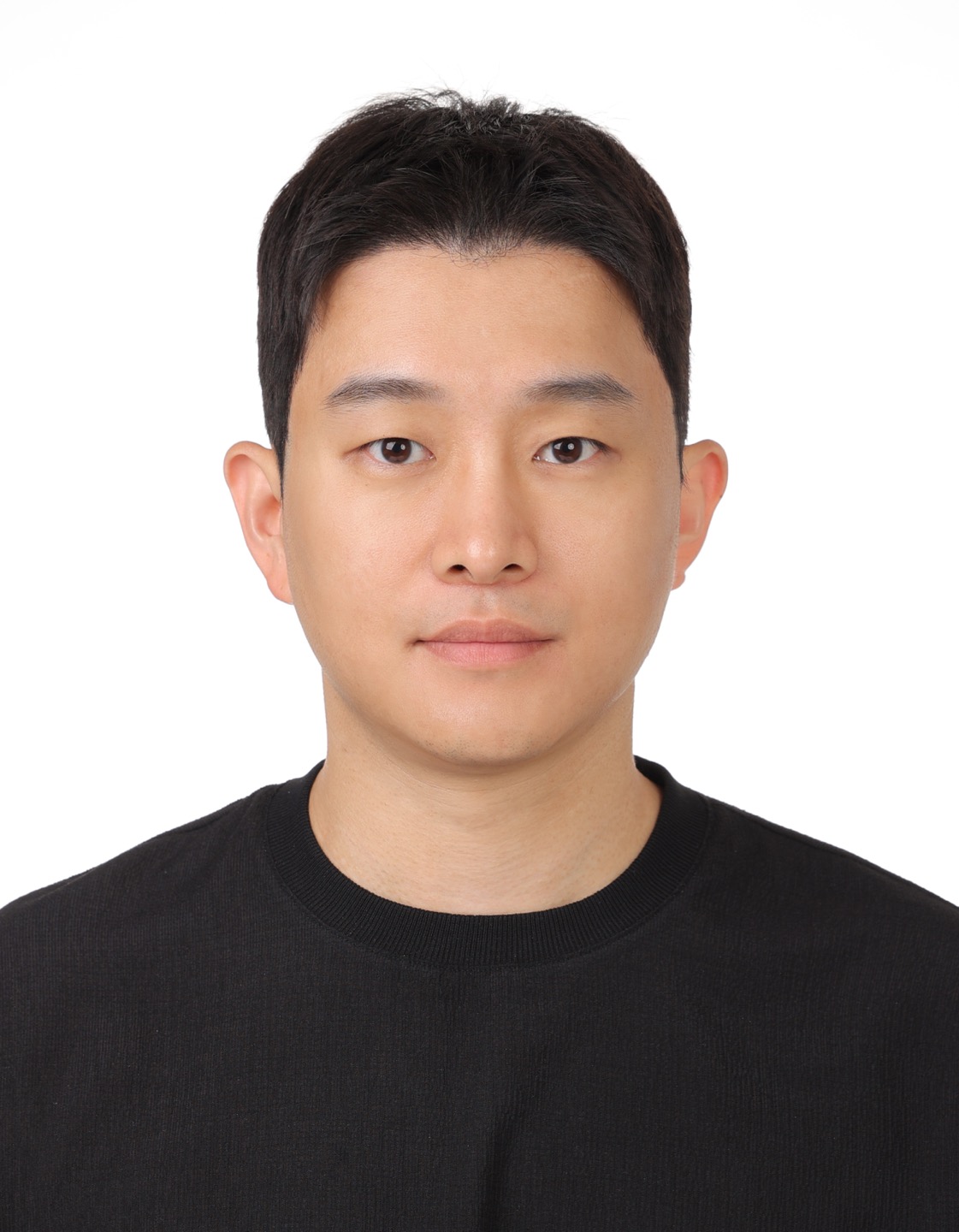}}] {Jaebum Park} (Student Member, IEEE) received the B.S. degree from Yonsei University, Seoul, South Korea, in 2015 and the M.S. degree from the Korea Advanced Institute of Science and Technology (KAIST), Daejeon, South Korea, in 2017, both in Electrical and Electronic Engineering. From 2017 to 2024, he was with the 6G Research Team at Samsung Research, Seoul, South Korea. He is currently pursuing the Ph.D. degree with the Department of Electrical and Computer Engineering at the University of California, San Diego. His research interests include advanced MIMO transceiver algorithms, energy-efficient MIMO techniques for next-generation RAN architectures, and wireless communications in upper-midband frequency bands.
\end{IEEEbiography}

\begin{IEEEbiography}[{\includegraphics[width=1in,clip]{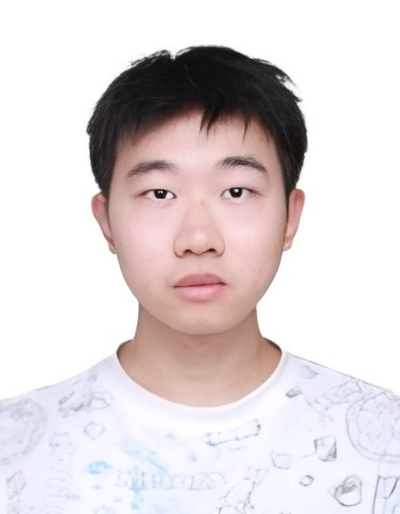}}]{Songjie Yang} received the B.E. degree in Communication Engineering from Shanghai University in 2020, and the Ph.D. degree with the National Key Laboratory of Wireless Communications, University of Electronic Science and Technology of China, in 2026. He was a visiting Ph.D. student under a joint training program at Singapore University of Technology and Design and Nanyang Technological University, Singapore. He is currently a Postdoctoral Fellow with the Beijing National Research Center for Information Science and Technology (BNRist), Tsinghua University. His research interests include reconfigurable MIMO, near-field extremely large-scale antenna array communications, integrated sensing and communications, and sparse signal processing. 

Dr. Yang has published over 40 top-tier IEEE journal and conference papers, including 7 ESI highly cited papers. He has been listed among the World's Top 2\% Scientists by Stanford University since 2026.  He served as a TPC Member for the IEEE ICC, WCNC, and IEEE Globecom. 
\end{IEEEbiography}

\begin{IEEEbiography}[{\includegraphics[width=1in,clip]{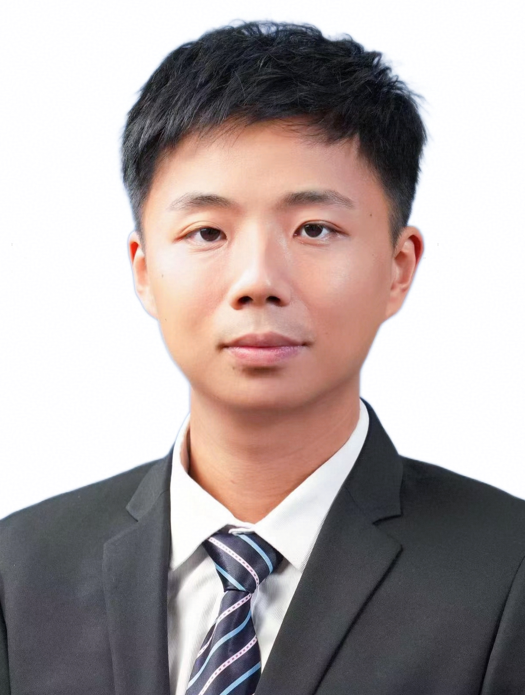}}]{Weidong Mei} (Member, IEEE) received the B.Eng. degree in communication engineering and the M.Eng. degree in communication and information systems from the University of Electronic Science and Technology of China, Chengdu, China, in 2014 and 2017, respectively, and the Ph.D. degree from the NUS Graduate School, National University of Singapore, in 2021 under the Integrative Sciences and Engineering Programme (ISEP) Scholarship. 

He was a Research Fellow with the Department of Electrical and Computer Engineering, National University of Singapore, from July 2021 to January 2023. He is currently a Professor with the University of Electronic Science and Technology of China. His research interests include reconfigurable antennas, near-field beam-engineering, intelligent reflecting surface, wireless drone communications, and convex optimization techniques. 

Dr. Mei has been listed in World's Top 2\% Scientists by Stanford University since 2021. He was the recipients of the Best Paper Award from the IEEE International Conference on Communications in 2021, the Best Student Paper Award from the International Conference on Future Communications and Networks in 2025, and the Outstanding Master's Thesis Award from the Chinese Institute of Electronics in 2017. He mentored his students to win the IEEE ComSoc SPCC Technical Committee Student Challenge and Video Contest Award in 2024. He was honored as the Exemplary Editor of the \textsc{IEEE Open Journal of the Communications Society} in 2024, the Exemplary Reviewer of the \textsc{IEEE Transactions on Communications} in 2019 and 2020, the \textsc{IEEE Open Journal of the Communications Society} in 2021 and 2024, the \textsc{IEEE Wireless Communications Letters} in 2019, 2021, 2022, and 2023, the \textsc{IEEE Communications Letters} in 2021 and 2022, and the {\it SCIENCE CHINA Information Sciences} in 2024 and 2025. He serves as an Associate Editor for the \textsc{IEEE Transactions on Communications}, \textsc{IEEE Internet of Things Journal}, \textsc{IEEE Wireless Communications Letters}, and \textsc{IEEE Open Journal of the Communications Society}, as a Young Editorial Board Member for {\it Journal of Communications and Information Networks} and {\it Acta Electronica Sinica}, and as a Co-Chair of the Workshop on {\it Intelligent Movable and Reconfigurable Antennas for Future Wireless Communication and Sensing} in IEEE Globecom from 2024 to 2026 and IEEE ICC from 2025 to 2026.
\end{IEEEbiography}

\begin{IEEEbiography}[{\includegraphics[width=1in,clip]{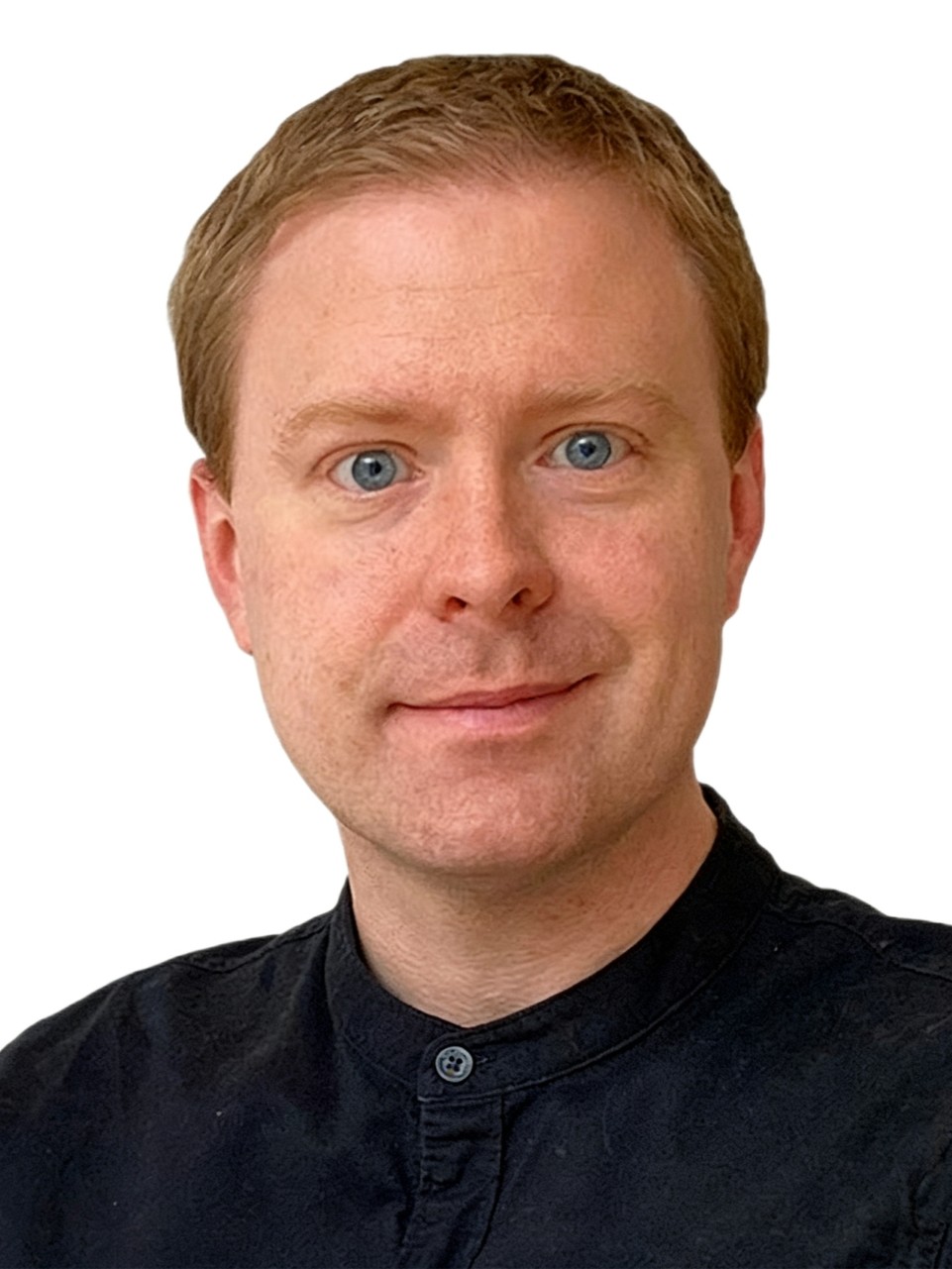}}]{Emil Bj\"ornson}
(Fellow, IEEE) received the M.S. degree in engineering mathematics from Lund University, Sweden, in 2007, and the Ph.D. degree in telecommunications from the KTH Royal Institute of Technology, Sweden, in 2011.

From 2012 to 2014, he was a Post-Doctoral Researcher with the Alcatel-Lucent Chair on Flexible Radio, SUPELEC, France. From 2014 to 2021, he was a faculty member at Link\"oping University, Sweden. He has been a Full Professor of Wireless Communication at KTH since 2020 and currently leads the Department of Communication Systems. He has authored the textbooks \emph{Optimal Resource Allocation in Coordinated Multi-Cell Systems} (2013), \emph{Massive MIMO Networks: Spectral, Energy, and Hardware Efficiency} (2017), \emph{Foundations of User-Centric Cell-Free Massive MIMO} (2021), and \emph{Introduction to Multiple Antenna Communications and Reconfigurable Surfaces} (2024). His research is focused on multi-antenna communications, sensing, and radio resource management, using methods from communication theory, signal processing, and machine learning. He has performed MIMO research since 2006. His papers have received more than 45\,000 citations. He has filed more than 30 patent applications. He is recognized as a Clarivate Highly Cited Researcher. He co-hosts the podcast Wireless Future and has a popular YouTube channel with the same name.

Prof. Bj\"ornson is a Wallenberg Academy Fellow, a Digital Futures Fellow, and an SSF Future Research Leader. He has received the 2018 and 2022 IEEE Marconi Prize Paper Awards in Wireless Communications, the 2019 EURASIP Early Career Award, the 2019 IEEE ComSoc Fred W. Ellersick Prize, the 2019 IEEE Signal Processing Magazine Best Column Award, the 2020 Pierre-Simon Laplace Early Career Technical Achievement Award, the 2020 CTTC Early Achievement Award, the 2021 IEEE ComSoc RCC Early Achievement Award, the 2023 IEEE ComSoc Outstanding Paper Award, the 2024 IEEE ComSoc Stephen O. Rice Prize, and the 2026 EURASIP Athanasios Papoulis Award.
\end{IEEEbiography}

\begin{IEEEbiography}[{\includegraphics[width=1in,clip]{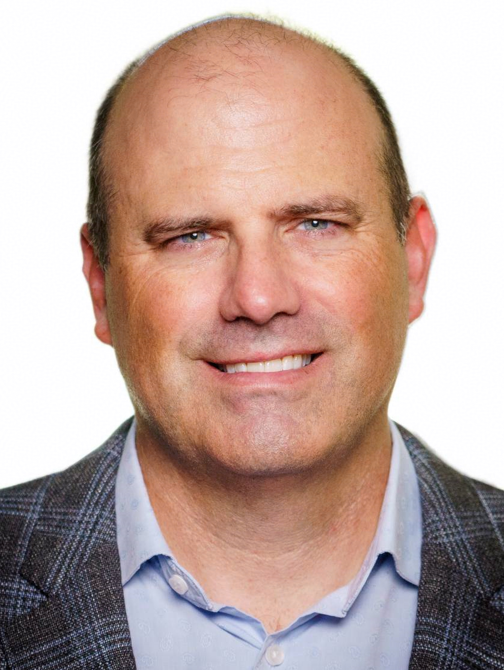}}]{Robert W. Heath Jr.}
(Fellow, IEEE) received the B.S.\ and M.S.\ degrees from the University of Virginia and the Ph.D.\ degree from Stanford University, all in electrical engineering. He is the Charles Lee Powell Chair in Wireless Communications in the Department of Electrical and Computer Engineering at the University of California San Diego, where he directs the Wireless Systems Innovation Laboratory, and he is President and CEO of MIMO Wireless Inc. He was previously the Lampe Distinguished Professor at North Carolina State University and the Cockrell Family Regents Chair in Engineering at The University of Texas at Austin. He authored \emph{Introduction to Wireless Digital Communication} (Prentice Hall, 2017) and co-authored \emph{Millimeter Wave Wireless Communications} (Prentice Hall, 2014) and \emph{Foundations of MIMO Communication} (Cambridge University Press, 2018).

Dr.\ Heath was elected to the National Academy of Engineering in 2025 and is a Fellow of the IEEE, the AAAS, and the National Academy of Inventors. He received the 2025 IEEE/RSE James Clerk Maxwell Medal, the 2021 IEEE Vehicular Technology Society James Evans Avant Garde Award, the 2019 IEEE Kiyo Tomiyasu Award, and the 2017 EURASIP Technical Achievement Award, and his papers have received the 2017 Marconi Prize Paper Award, the 2019 IEEE Communications Society Stephen O.\ Rice Prize, and the 2020 and 2024 IEEE Signal Processing Society Donald G.\ Fink Overview Paper Awards. He was Editor-in-Chief of \emph{IEEE Signal Processing Magazine} from 2018 to 2020. He is a licensed Amateur Radio Operator, a registered Professional Engineer in Texas, and a Private Pilot.
\end{IEEEbiography}

\end{document}